\documentclass[journal]{IEEEtran} 		
\usepackage{amsmath,amsfonts}

\usepackage{array}
\usepackage[caption=false,font=normalsize,labelfont=sf,textfont=sf]{subfig}
\usepackage{textcomp}
\usepackage{stfloats}
\usepackage{url}
\usepackage{verbatim}
\usepackage{graphicx}
\usepackage{siunitx} 			
\usepackage{multirow} 		
\usepackage{booktabs}	 
\usepackage{ragged2e}
\usepackage[most]{tcolorbox}

\def\BibTeX{{\rm B\kern-.05em{\sc i\kern-.025em b}\kern-.08em
    T\kern-.1667em\lower.7ex\hbox{E}\kern-.125emX}}

\usepackage{balance}

\usepackage[english]{babel}  

\usepackage{amssymb}
\usepackage{amsmath}
\usepackage{graphicx}					
\usepackage{hyperref} 

\usepackage{amsthm}				

\usepackage{pdfpages}				
\usepackage{dsfont}   					

\usepackage{algorithm}
\usepackage{algpseudocode}

\usepackage{siunitx}  

\newtheoremstyle{mystyle}
{3pt}{3pt}        
{\normalfont}     
{}                
{\bfseries\itshape}       
{.}               
{.5em}            
{}                

\theoremstyle{mystyle}

\newtheorem{Thm}{Theorem}

\newtheorem{Rmk}{Remark}
\newtheorem{Def}{Definition}

\newcommand{\R}{\mathbb{R}}

\newcommand{\eeq}{\end{equation}}

\newcommand{\bX}{\mbox{\boldmath $X$}}
\newcommand{\bh}{\mbox{\boldmath $h$}}

\newcommand{\bc}{\mbox{\boldmath $c$}}

\newcommand{\bgamma}{\mbox{\boldmath $\gamma$}}

\newcommand{\bGamma}{\mbox{\boldmath $\Gamma$}}

\newcommand{\bG}{\mbox{\boldmath $G$}}

\newcommand{\bx}{\mbox{\boldmath $x$}}

\newcommand{\bg}{\mbox{\boldmath $g$}}

\newcommand{\bv}{\mbox{\boldmath $v$}}

\newcommand{\bphi}{\mbox{\boldmath $\phi$}}
\newcommand{\bI}{\mbox{\boldmath $I$}}
\newcommand{\bbtheta}{\mbox{\boldmath $\theta$}}

\newcommand{\bSigma}{{\bf \Sigma}}

\newcommand{\bt}{\mbox{\boldmath $t$}}

\newcommand{\by}{\mbox{\boldmath $y$}}
\newcommand{\ds}{\displaystyle}

\newcommand{\beq}{\begin{equation}}

\newcommand{\tr}      {{\mathrm{tr}}}    
\newcommand{\E}       {{\mathrm E}}      

\newtheorem{proposition}{Proposition}

\renewcommand{\E}       {{\mathbb{E}}}      

\begin{document}
	
\title{A Stochastic Optimization Framework for RIS-Aided Wireless Network Design}
\author{
	Davide Gagliardi,  Alessio Zappone, {\em Fellow, IEEE},\\ Domenico Ciuonzo, {\em Senior Member, IEEE}, Marco Di Renzo, {\em Fellow, IEEE}
	\thanks{D. Gagliardi and A. Zappone are with the University of Cassino and Southern Lazio, Italy (\{davide.gagliardi, alessio.zappone\}@unicas.it). D. Ciuonzo is with the University of Napoli Federico II, Italy (domenico.ciuonzo@unina.it). M. Di Renzo is with CNRS and CentraleSup\'elec, Institute of Electronics and Digital Technologies (IETR), Avenue de la Boulaie, 35576 Cesson-S\'evign\'e, France (marco.direnzo@centralesupelec.fr), and with King's College London, Department of Engineering - Centre for Telecommunications Research, WC2R 2LS London, United Kingdom (marco.di\_renzo@kcl.ac.uk). 
}}

\maketitle

\begin{abstract}
Reconfigurable intelligent surfaces (RISs) are a promising technology for improving the spectral and energy efficiency of future wireless networks, which make use of metasurfaces. However, optimizing RIS configurations typically leads to large-scale, non-convex problems whose complexity grows significantly with the number of scattering elements and the adoption of advanced metasurface architectures. In this paper, we develop a stochastic optimization framework for RIS-aided wireless networks based on continuous versions of the ($a$) cross-entropy (CE) and ($b$) Metropolis-Hastings (MH) methods. Unlike existing stochastic approaches that mainly focus on discrete optimization, the proposed framework directly handles continuous variables and can be readily applied to discrete settings through relaxation and projection. We provide a theoretical characterization of the proposed algorithms, including convergence guarantees and efficiency analysis. 
The framework is applied to ($i$) achievable-rate maximization with nearly-passive RISs and ($ii$) energy-efficiency maximization with active RISs. Numerical results show that the proposed methods achieve performance comparable to, or better than, state-of-the-art deterministic algorithms, while reducing execution times up to 10 times in representative scenarios.
\end{abstract}

\begin{IEEEkeywords}
RIS-aided wireless networks design, stochastic optimization, Monte-Carlo optimization, importance sampling.
\end{IEEEkeywords}

\section{Introduction}
\IEEEPARstart{T}{he sixth generation (6G)} of wireless networks, expected to be deployed around 2030, is envisioned to support unprecedented traffic demands and Tb/s data rates while ensuring sustainable energy consumption and implementation complexity~\cite{6GMagazine}.
Among the technologies envisioned for 6G, reconfigurable intelligent surfaces (RISs) have emerged as a promising solution for improving both spectral and energy efficiency~\cite{huang2019reconfigurable,di2020smart}. 
RISs are metasurfaces that can be deployed either to create new and controllable propagation paths~\cite{strinati2021reconfigurable,zhang2025smart,katwe2024overview} or to enable low-complexity beamforming architectures integrated into the transceivers~\cite{huang2020holographic,an2023tutorial,gong2024holographic_Tut}.
In either scenario, a decisive advantage of metasurfaces is the fact that they fully operate in the \emph{wave domain}, dispensing with digital signal processing, analog-to-digital and digital-to-analog converters, which drastically reduces their hardware complexity, cost, and energy consumption. 
In turn, this provides large array gains, because the number of meta-atoms that can be equipped on a single metasurface can be significantly higher than the number of antennas in traditional arrays based on digital processing. This has been shown to provide larger data-rate and energy efficiency (EE) compared to conventional multiple-antenna technologies~\cite{ZapTWC2021,guo2023green,gong2024holographic}.

However, achieving the performance gains promised by metasurfaces requires the \emph{optimization of their scattering coefficients}, since non-optimized configurations can significantly degrade system performance. This poses a major complexity issue, as relevant deployments often involve \emph{hundreds of controllable meta-atoms}, resulting in large-scale optimization problems, whose complexity grows rapidly with the size of the metasurface~\cite{Nerini2024_MultiPort,Zhu25Movable,Fotock_RHB,Mursia2025_T3DRIS}. In addition, emerging metasurface architectures such as beyond-diagonal metasurfaces, which equip more meta-atoms than diagonal ones~\cite{Soleymani_BDRIS,Li2024_BeyondDiag} and stacked intelligent metasurfaces (SIMs), which are composed of multiple layers of meta-atoms~\cite{Magbool2025_SFIM,An2024_SIM,An2025_SIM}, further increase the number of optimization variables and their mutual coupling, resulting in even higher computational complexity.
Consequently, even optimization algorithms with polynomial complexity may become impractical in large-scale deployments or rapidly varying wireless environments.
Furthermore, typical optimization problems aimed at the maximization of common network performance metrics, e.g. achievable rate, EE, latency, localization accuracy, are \emph{non convex}.
Existing approaches therefore often rely on iterative optimization frameworks, which require solving a sequence of auxiliary (convex) optimization problems. While effective, these methods generally provide \emph{locally-optimal solutions} and may incur a substantial \emph{computational burden}, especially in large-scale metasurface deployments or rapidly-varying wireless environments, in which the optimization of the RIS coefficients must be updated frequently. 

An optimization framework that can overcome these drawbacks is Monte Carlo (MC) optimization, a branch of stochastic optimization that employs probabilistic techniques to explore the feasible set. In MC optimization, the objective and constraint functions are deterministic, but the search for the optimum is performed by repeatedly sampling the feasible set according to a probability distribution. The probability distribution is updated through an iterative algorithm whose output is ideally a distribution which allows sampling the optimal solution with probability one. 
As such, MC methods can be applied to non-convex optimization problems without requiring convexity assumptions or the evaluation of objective-function gradients. 
Among the most prominent approaches are the cross-entropy (CE) method~\cite{rubinstein2004cross,rubinstein1999cross} and the Metropolis-Hastings (MH) method~\cite{hastings1970monte,metropolis1953equation}. 
The former iteratively refines a sampling distribution toward regions containing high-quality solutions by minimizing the Kullback–Leibler divergence from an ideal target distribution~\cite{rubinstein2004cross,rubinstein1999cross}, whereas the latter explores the feasible set through a Markov-chain-based random process~\cite{hastings1970monte,metropolis1953equation}.

\noindent
\textbf{Related Work:}
In the context of optimization for metasurface-aided wireless networks, relatively \emph{few works have explored stochastic optimization techniques}. 
In~\cite{ren2022configuring,Xiao2025,Ross2024,Chen2024,Zhang2025}, stochastic optimization is employed for the optimization of single-user communication systems aided by nearly-passive metasurfaces.
In~\cite{ren2022configuring}, a single-user single-input-single-output (SISO) system aided by a nearly-passive RIS with discrete phase shifts is considered, and ad-hoc stochastic optimization methods are developed for signal-to-noise-ratio (SNR) maximization based on received signal samples. 
A similar SISO setup is studied in~\cite{Xiao2025}, where the CE method is used to optimize the positions of the discrete phase shifts of a fluid RIS for capacity maximization by the CE method.
In~\cite{Ross2024}, genetic programming is used to optimize the discrete phase shifts in an RIS-aided multiple-input-single-output (MISO) system, whereas~\cite{Chen2024} tackles the same problem by the CE method.
Finally,~\cite{Zhang2025} applies the CE method to discrete phase-shift optimization of system capacity in an RIS-aided single-user multiple-input-multiple-output (MIMO) system.

In~\cite{JChen2024b,El-Meadawy2025,JChen2026,Shen26,JChen2024c,JChen2024a} the downlink of a multi-user multiple-input single-output (MISO) system is considered. 
In~\cite{JChen2024b}, zero-forcing precoding is employed and the discrete phase shifts of a nearly-passive simultaneous transmit and receive (STAR) reconfigurable metasurface are optimized for sum-rate maximization by the CE method. 
In~\cite{El-Meadawy2025}, a nearly-passive SIM is employed, whose discrete phase shifts are optimized  by the CE and simulated annealing algorithms, for the optimization of the system sum-rate and minimum rate. In~\cite{JChen2026}, a nearly-passive metasurface with movable elements is considered and the discrete phase shifts and positions of the elements are optimized by the CE method, assuming zero-forcing transmission. 
A continuous version of the CE method is considered in~\cite{Shen26} for phase-shift optimization in a movable RIS-assisted non-orthogonal multiple access (NOMA) system toward sum-rate maximization.
System EE maximization via discrete CE-based optimization is investigated in~\cite{JChen2024c,JChen2024a}.
Finally, \cite{GeneticPan2022} considers an RIS-aided multi-user single-input multiple-output (SIMO) uplink system and employs genetic programming for sum-rate maximization via discrete shifts.

\noindent
\textbf{Research Gaps and Novel Contributions:}
Based on the aforementioned state-of-the-art, \emph{four main research gaps can be identified}.
First, with the exception of~\cite{Shen26}, existing stochastic optimization methods for RIS design \emph{operate directly on discrete feasible sets}. While this is motivated by the finite number of phase shifts of the meta-atoms employed in practical implementations, it often leads to large combinatorial search spaces. By contrast, continuous relaxations followed by projection onto the discrete feasible set have been shown to reduce the complexity of deterministic optimization methods. Extending this principle to stochastic optimization requires the development of MC algorithms capable of operating directly in continuous domains. Second, previous works on MC-based optimization for RIS design \emph{focus on the CE method, whereas the MH method has not been explored, yet.}
Third, all previous works apply MC optimization methods \emph{without formally discussing their convergence and optimality properties}. A mathematical treatment of the convergence and efficiency of MC optimization has not been addressed to date, especially for applications to continuous domains. 
Fourth, prior work has \emph{mainly focused on rate maximization problems in conventional nearly-passive RIS architectures} (in single-user or downlink multi-user systems). The application of stochastic optimization to more general metasurface models, such as active RISs with global reflection constraints, and to EE optimization problems remains comparatively unexplored. In order to fill these gaps, this works makes the following \emph{novel contributions}:

\begin{itemize}
\item We develop continuous-domain CE and MH algorithms for RIS optimization. The proposed framework directly handles continuous optimization variables and can also be applied to discrete RIS designs through relaxation and projection. Numerical results show that this approach achieves the same performance as discrete-domain optimization, but with a significantly lower complexity.

\item We demonstrate the versatility of the proposed framework through \emph{two representative case studies}: ($i$) achievable-rate maximization in nearly-passive RISs and ($ii$) (global) EE maximization in active RISs with global reflection constraints~\cite{DiRenzoGlobal}. These examples illustrate the applicability of the proposed algorithms to a broad class of non-convex RIS design problems.

\item We provide a rigorous theoretical analysis of the proposed algorithms, establishing convergence guarantees and characterizing the efficiency of the resulting solutions. To the best of our knowledge, this is the first convergence study of continuous CE- and MH-based optimization tailored to RIS-aided wireless networks.

\item We numerically assess the performance of the proposed methods, comparing them against state-of-the-art benchmarks, including the popular alternating optimization (AO) method. The results show that the proposed CE and MH algorithms perform better than existing approaches while significantly reducing the execution time.

\end{itemize}

The rest of the work is organized as follows. Sec.~\ref{sec:optimization_framework} provides the theoretical analysis of the continuous versions of the CE and MH algorithms. Sec.~\ref{sec:case_study} applies the CE and MH to the two case-studies. Numerical results are provided in Sec.~\ref{sec:numerical_results}, while concluding remarks are given in Sec.~\ref{sec:conclusions}. 

\section{Optimization framework}\label{sec:optimization_framework}
Consider the generic maximization problem 
\begin{align}\label{Prob:Problem_CE}
&\max_{\bx}\ds {\mathcal{J}}(\bx)\;,\;\textrm{s.t.}\;\bx\in{\cal X}\;. 
\end{align}
We assume only that ${\cal J}(\cdot)$ is continuous and ${\cal X}$ is compact, so that \eqref{Prob:Problem_CE} admits a solution. The CE and MH methods are described in  the next two subsections. 

\subsection{Cross-entropy maximization}
Traditionally, the CE method for optimization has been applied to discrete optimization problem. This section describes its application to continuous optimization problems, which is the focus of this work. As evaluated in Sec. \ref{sec:numerical_results}, the continuous formulation enjoys a lower complexity than its discrete counterpart.   
Consider Problem \eqref{Prob:Problem_CE} and denote by $\bx^{\star}$ a global maximizer of $\mathcal{J}$ over $\mathcal{X}$, and $y^{\star}=\mathcal{J}(\bx^{\star})$.  Rather than directly tackling \eqref{Prob:Problem_CE} by gradient-based methods or other deterministic approaches, the CE method tackles \eqref{Prob:Problem_CE} by considering the following stochastic relaxation
\begin{equation}\label{Eq:ProbStochastic}
p_{y}=\mathbb{P}\left( \mathcal{J}( \bX )\geq y \right) \geq \rho\;,
\end{equation}
wherein $\bX$ is a random variable with support over $\mathcal{X}$, $y$ is a desired performance level, and $\rho>0$. Thus, Problem \eqref{Prob:Problem_CE} is converted into finding a probability distribution for $\bX$, such that, with probability larger than $\rho$, the objective evaluated at a realization of $\bX$ is above a desired threshold $y$. Otherwise stated, we look for a distribution $g^{\star}$ such that a feasible point randomly extracted from $\mathcal{X}$ according to $g^{\star}$ fulfills \eqref{Eq:ProbStochastic}. If $y=y^{\star}$ and $\rho=1$, \eqref{Eq:ProbStochastic} would lead to Dirac delta functions centered at the global solutions of \eqref{Eq:ProbStochastic}. However, such a distribution can not be computed, since the value $y^{\star}$ is not known. Then, assume $y<y^{\star}$ and consider a tentative distribution $f$ over $\mathcal{X}$. Denoting by $\mathds{1}_{y}(\bx)$ the indicator function of $\{\mathcal{J}(\bx)\geq y\}$, i.e. $\mathds{1}_{y}(\bx)=1$ if $\mathcal{J}(\bx)\geq y$ and $0$ otherwise, we have that\footnote{The notation $\mathbb{E}_{f}[h(\bX)]$ means the expectation of the random quantity $h(\bX)$ with respect to the distribution $f$, i.e. $\int h(\bx)f(\bx)d\bx$.} 
\begin{equation}\label{Eq:Py}
p_{y}\!=\!\mathbb{P}\left( \mathcal{J}(\bX)\!\geq \!y\right)\!=\!\mathbb{E}_{f}\!\left[\!\mathds{1}_{y}(\bX)\!\right]\!=\!\mathbb{E}_{g}\!\left[\!\mathds{1}_{y}(\bX)\frac{f(\bX)}{g(\bX)}\!\right]\;,
\end{equation}
where $g$ is a new distribution over\footnote{In order for $g/q$ to be well-defined, $q$ must be absolutely continuous with respect to $g$, i.e. the support of $g$ must be included in the support of $q$, so that all points at which $q=0$ need not be considered in the ratio.} $\mathcal{X}$. The solution of \eqref{Eq:ProbStochastic} is a (properly normalized) distribution $g^{\star}$ that concentrates all probability density in the set $\{\mathcal{J}( \bX )\geq y\}$, i.e. \footnote{Note that quantity $\mathbb{E}_{f}\left[\mathds{1}_{y}(\bX)\right]$ acts as the normalization constant to ensure that the distribution integrates to one.}
\begin{equation}\label{Eq:g_star}
g^{\star}(\bx)=\frac{\mathds{1}_{y}(\bx)f(\bx)}{\mathbb{E}_{f}\left[\mathds{1}_{y}(\bX)\right]}=\frac{\mathds{1}_{y}(\bx)f(\bx)}{p_{y}}\;.
\end{equation}
Unfortunately, \eqref{Eq:g_star} can not be computed either, because the normalization constant $\mathbb{E}_{f}\left[\mathds{1}_{y}(\bX)\right]$ is also unknown. However, the CE method provides a framework to approximate \eqref{Eq:g_star}, in the sense of the Kullback-Leibler (KL) divergence, also known as relative entropy. Specifically, the CE method looks for a distribution $q$ that minimizes the KL-divergence from the optimal distribution $g^{\star}$ in \eqref{Eq:g_star}. The KL-divergence between $g^{\star}$ and $q$ can be written as\footnote{For simplicity, and without loss of generality, we use natural logarithms.}
\begin{align}
&D(g^{\star}||q)\!=\!\mathbb{E}_{g^{\star}}\!\!\left[\ln\frac{g^{\star}(\bX)}{q(\bX)}\right]\!=\!\mathbb{E}_{g^{\star}}\left[\ln{g^{\star}(\bX)}\right]-\mathbb{E}_{g^{\star}}\left[\ln{q(\bX)}\right]\notag\\
&=\int\frac{\mathds{1}_{y}(\bx)f(\bx)}{p_{y}}\left(\ln\left(\frac{\mathds{1}_{y}(\bx)f(\bx)}{p_{y}}\right)-\ln\left(q(\bx)\right)\right)d\bx\notag\\
&=\!\frac{1}{p_{y}}\mathbb{E}_{f}\!\left[\mathds{1}_{y}(\!\bX\!)\ln(f(\!\bX\!))\right]\!-\!\ln(p_{y})\!-\!\frac{1}{p_{y}}\mathbb{E}_{f}\!\left[\mathds{1}_{y}(\bX)\ln(q(\bX))\right]\notag
\end{align}
Therefore, the \emph{minimization} of $D(g^{\star}||q)$ with respect to $q$ is equivalent to the following \emph{maximization} problem
\begin{align}\label{Prob:Opt_q}
\max_{q}\mathbb{E}_{f}\left[\mathds{1}_{y}(\bX)\ln(q(\bX))\right]
\end{align}
Due to the continuous nature of Problem \eqref{Prob:Opt_q}, the optimization is with respect to a function, which may be computationally challenging in general. Then, it is convenient to restrict the search for $q$ to a specific family of distributions, whose parameters become the optimization variables \cite[Sec. 2.1]{deboer2005tutorial}. For example, if $f$ and $q$ are assumed Gaussian, the optimization reduces to determining the mean vector and covariance matrix of $q$. In general, assume that $f$ and $q$ belong to a family of distributions characterized by the parameter vectors $\bbtheta_{f}$ and $\bbtheta_{q}$, respectively. Then, Problem \eqref{Prob:Opt_q}  can be recast as 
\begin{align}\label{Prob:Opt_theta}
\max_{\bbtheta_{q}}\mathbb{E}_{\footnotesize\bbtheta_{f}}\left[\mathds{1}_{y}(\bX)\ln(q(\bX,\bbtheta_{q})\right]
\end{align}
The following definition introduces the family of distributions that are canonically used in the CE method. 
\begin{Def}
A random vector $\bX\in\R^N$ is said to belong to the \textit{$m$-dimensional exponential family} ($m$-EF) if its probability density function (PDF) can be expressed as: 
	\begin{equation}
		\label{def:NEF_form}
		q(\bx,\boldsymbol{\theta}) := c(\boldsymbol{\theta}) e^{\boldsymbol{\theta}^T \bt(\bx)} h(\bx) 
	\end{equation}
	where $\bt(\bx) = [t_1(\bx), \dots, t_m(\bx)]^T$, $h(\bx) > 0$, and
	$c(\bbtheta) = \left( \int {e^{\bbtheta^{T}\bt(\bx)}}  h(\bx) d\bx \right)^{-1} < \infty$ 
	$\forall \bbtheta \in \R^m$. \\
	If $\bt(\bx)=\bx$ (which also implies $m=N$), then  $\bX$ is said to belong to the \textit{natural exponential family} (NEF).
\end{Def}
\noindent The $m$-EF includes a wide variety of distributions, such as the \textit{exponential}, \textit{Gaussian}, \textit{Gamma}, \textit{Beta}  etc. and their multivariate versions \cite{rubinstein2004cross}, \cite[Sec. 2.3]{kroese_rubinstein_Glynn_2013_CE_meth_for_estim}. For example, $X\sim {\cal N}(\mu,\sigma^{2})$ belongs to the $m$-EF with $m=2$, $\bbtheta^{T}=[\mu/\sigma^{2}\;, -1/(2\sigma^{2})]$, $\bt(x)=[x\;, x^{2}]^{T}$, $c(\mu,\sigma)=e^{-\mu^{2}/(2\sigma^{2})}/\sqrt{2\pi}\sigma$, and $h(x)=1$. This simple example also shows that that the mean and variance can be obtained as  $\mu=\E[t_{1}(x)]$ and $\sigma^{2}=\E[t_{2}(x)]-(\E[t_{1}(x)])^{2}$. Similar steps can be taken for any distribution belonging to the $m$-EF, showing that a one-to-one correspondence exists between the parameters of the distribution, e.g. mean and variance, and the components of $\E_{f}[\bt(\bX)]$. As a consequence, the problem of finding the optimal $\bbtheta$ is equivalent to finding $\E_{f}[\bt(\bx)]$. Nevertheless, no general formula exists that maps the relationship between $\bbtheta$ and $\E_{f}[\bt(\bX)]$. Thus, the map between $\bbtheta$ and $\E_{f}[\bt(\bX)]$ must be derived for the specific distribution of choice. When $q(\bx, \boldsymbol{\theta})$ belongs to the $m$-EF, the following proposition holds. 
\begin{proposition}\label{Prop:Opt_theta}
Assume $q(\bx,\boldsymbol{\theta})$ belongs to the $m$-EF. Then $\mathbb{E}_{f}\left[\mathds{1}_{y}(\bX)\ln(q(\bX, \bbtheta))\right]$ is concave in $\bbtheta$. 
\end{proposition}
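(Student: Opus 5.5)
The plan is to expand $\ln q(\bx,\bbtheta)$ using the $m$-EF form \eqref{def:NEF_form} and show that, pointwise in $\bx$, it is a concave function of $\bbtheta$. Concavity then passes through the expectation, because it is an integral against the nonnegative weight $\mathds{1}_{y}(\bx)f(\bx)$. Concretely,
\begin{equation*}
\ln q(\bx,\bbtheta)=\ln c(\bbtheta)+\bbtheta^{T}\bt(\bx)+\ln h(\bx),
\end{equation*}
and $\ln c(\bbtheta)=-A(\bbtheta)$ with $A(\bbtheta):=\ln\int e^{\bbtheta^{T}\bt(\bx)}h(\bx)\,d\bx$ the log-partition function. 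The term $\bbtheta^{T}\bt(\bx)$ is linear in $\bbtheta$ and $\ln h(\bx)$ does not depend on $\bbtheta$. Hence it suffices to show that $A$ is convex on $\R^m$, which is finite there by the assumption $c(\bbtheta)<\infty$ in the definition.

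\textbf{Convexity of $A$.} I would use H\"older's inequality rather than differentiating, which avoids any justification for exchanging derivative and integral. Take $\bbtheta_1,\bbtheta_2$ and $\lambda\in[0,1]$. Write
\begin{equation*}
e^{(\lambda\bbtheta_1+(1-\lambda)\bbtheta_2)^{T}\bt(\bx)}h(\bx)=\left(e^{\bbtheta_1^{T}\bt(\bx)}h(\bx)\right)^{\lambda}\left(e^{\bbtheta_2^{T}\bt(\bx)}h(\bx)\right)^{1-\lambda}.
\end{equation*}
Apply H\"older with exponents $1/\lambda$ and $1/(1-\lambda)$; the endpoint cases $\lambda\in\{0,1\}$ are trivial. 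This gives
\begin{equation*}
e^{A(\lambda\bbtheta_1+(1-\lambda)\bbtheta_2)}\le e^{\lambda A(\bbtheta_1)}e^{(1-\lambda)A(\bbtheta_2)}.
\end{equation*}
Taking logarithms yields convexity of $A$. As an alternative, where differentiation under the integral is permitted, the Hessian of $A$ equals the covariance matrix of $\bt(\bX)$ under $q(\cdot,\bbtheta)$, which is positive semidefinite.

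\textbf{Passing to the expectation.} Combining the two steps,
\begin{equation*}
\mathbb{E}_{f}\left[\mathds{1}_{y}(\bX)\ln q(\bX,\bbtheta)\right]=-p_{y}A(\bbtheta)+\bbtheta^{T}\mathbb{E}_{f}\left[\mathds{1}_{y}(\bX)\bt(\bX)\right]+\mathbb{E}_{f}\left[\mathds{1}_{y}(\bX)\ln h(\bX)\right].
\end{equation*}
The first term is $-p_{y}\ge 0$ times a convex function, hence concave. The second term is linear in $\bbtheta$, and the third is constant in $\bbtheta$. Their sum is therefore concave, as claimed.

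\textbf{Main obstacle.} The one delicate point is ensuring that the expectations $\mathbb{E}_{f}[\mathds{1}_{y}\bt(\bX)]$ and $\mathbb{E}_{f}[\mathds{1}_{y}\ln h(\bX)]$ are finite, so that the decomposition does not produce $\infty-\infty$. On the compact set $\mathcal{X}$ with continuous $\bt$ and $h>0$ continuous this is automatic. Otherwise I would add it as a standing integrability assumption, noting that if the constant term is $-\infty$ the function is identically $-\infty$, which is trivially concave.
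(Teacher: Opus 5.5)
Your proof is correct. It uses the same reduction as the paper: write $\ln q(\bx,\bbtheta)=-A(\bbtheta)+\bbtheta^{T}\bt(\bx)+\ln h(\bx)$, so everything rests on convexity of the log-partition function $A$, and then integrate against the nonnegative weight $\mathds{1}_{y}f$. You establish that key fact differently. The paper differentiates under the integral sign twice, obtaining $\nabla_{\bbtheta}A(\bbtheta)=\E_{\bbtheta}[\bt(\bX)]$ and $\nabla^{2}_{\bbtheta}A(\bbtheta)=\mathrm{Cov}_{\bbtheta}[\bt(\bX)]\succeq 0$; this is exactly the alternative you mention in passing.

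Your H\"older argument is more elementary and more general. It needs no differentiability of $A$ and no justification for exchanging derivative and integral, which the paper leaves implicit. It also shows that $\{\bbtheta: A(\bbtheta)<\infty\}$ is convex, so finiteness of $A$ on all of $\R^{m}$ is not even needed. What it does not deliver is the gradient identity $\nabla_{\bbtheta}A(\bbtheta)=\E_{\bbtheta}[\bt(\bX)]$, which is the real by-product of the paper's computation. That identity is what turns the first-order optimality condition into the moment-matching update \eqref{Eq:OptParameters} used immediately afterwards, so the differentiation route is still needed there.

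Your explicit decomposition $-p_{y}A(\bbtheta)+\bbtheta^{T}\E_{f}[\mathds{1}_{y}(\bX)\bt(\bX)]+\E_{f}[\mathds{1}_{y}(\bX)\ln h(\bX)]$, together with the integrability caveat, tightens the paper's one-line appeal to linearity of the mean. One small slip: the first term is $p_{y}\ge 0$ times the concave function $-A$ (equivalently, $-p_{y}\le 0$ times a convex function), not ``$-p_{y}\ge 0$'' times a convex one.
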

\begin{IEEEproof}
See Appendix.
\end{IEEEproof}
\noindent As a consequence of Proposition \ref{Prop:Opt_theta}, Problem \eqref{Prob:Opt_theta} is convex whenever $q$ belongs to the $m$-EF, and can be solved by setting the gradient of the objective function to zero, which yields
\begin{equation}\label{Eq:OptParameters}
\E_{f}[\bt(\bX)]=\frac{\E_{f}[\mathds{1}_{y}(\bX)\bt(\bX)]}{\E_{f}[\mathds{1}_{y}(\bX)]}.
\end{equation}
Given the one-to-one map between $\bbtheta$ and $\E_{f}[\bt(\bX)]$, computing \eqref{Eq:OptParameters} yields the optimal solution of \eqref{Prob:Opt_theta}. Finally, in practice, computing \eqref{Eq:OptParameters} might be difficult due to the statistical expectations. For this reason, it is customary to approximate it  through sample averages, namely
\begin{equation}\label{Eq:OptParametersMC2}
\E_{f}[\bt(\bX)]\approx\frac{\sum_{k=1}^{K}\mathds{1}_{y}(\bx_{k})\bt(\bx_{k})}{\sum_{k=1}^{K}\mathds{1}_{y}(\bx_{k})}=\frac{1}{|\mathcal{E}|}\sum_{\scriptsize{\bx}_{k}\in\mathcal{E}}\bt(\bx_{k})\;,
\end{equation}
with $\bx_{1},\ldots,\bx_{K}$ \emph{feasible} points drawn from $\mathcal{X}$ according to the distribution $q$ and $\mathcal{E}=\{\bx_{k} | \mathcal{J}(\bx_{k})\geq y\}$ containing only the samples that yield an objective value larger than $y$. The set $\mathcal{E}$ is called the \emph{elite set}, and its elements the \emph{elite samples}. Finally, the procedure outlined above finds a distribution that fulfills the condition in \eqref{Eq:ProbStochastic}, for a fixed threshold $y$. Clearly, for optimization purposes, $y$ should be as large as possible, but, of course, it can not be larger than the optimal solution of \eqref{Prob:Problem_CE}, $y^{\star}$. Since, in practice, $y^{\star}$ is unknown, the CE method starts from $y^{[0]}\leq y^{\star}$, which can be found by simply computing the objective function at any feasible point. Then, the CE method progressively increases the threshold $y^{[t]}$ in each iteration $t$, by finding a distribution that fulfills \eqref{Prob:Problem_CE} with $y=y^{[t]}$. The overall procedure is outlined in 
Algorithm \ref{algo:basic}, which constructs a sequence of density parameters $\{\boldsymbol{\theta}^{[t]}, t\geq0\}$ and a \textit{non-decreasing} sequence of levels $\{y^{[t]}, t\geq0\}$, which both converge \cite{homem_et_rubinstain_2002estimation}, \cite{shapiro1996simulation}. Fig. \ref{fig:FigFlusso} shows the flow chart of the CE algorithm, while its pseudo-code is reported in \textbf{Algorithm} \ref{algo:basic}.  
\begin{figure}[!h]
 \includegraphics[width=0.5\textwidth]
    {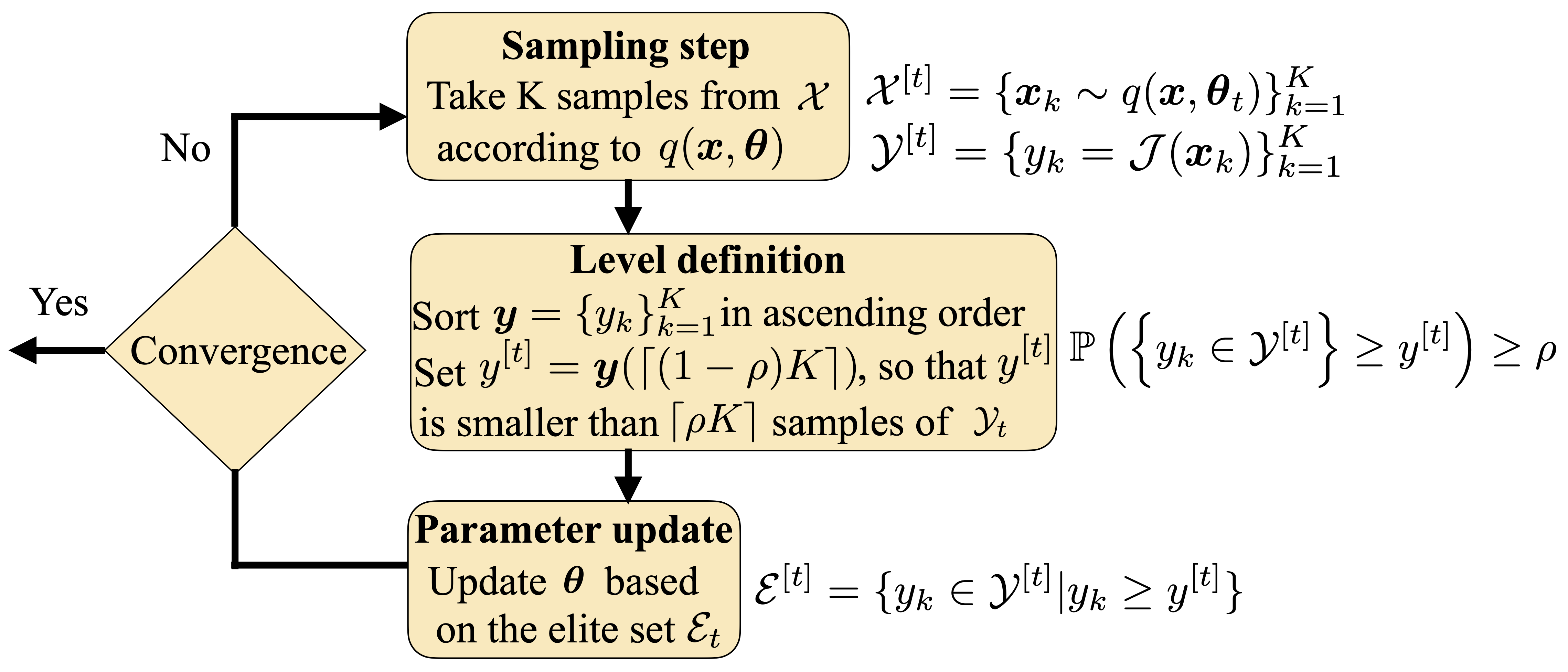}
    \caption{Flow chart of the CE maximization algorithm}
    \label{fig:FigFlusso}
\end{figure}

\begin{algorithm}
	\caption{CE Method for Maximization - Base version}
	\small
	\label{algo:basic}
	\begin{algorithmic}[1]
		\State \textbf{Input:} $\boldsymbol{\theta}^{[0]}$, $K$, $\rho$
		\State \textbf{Initialize:} 
		$t \gets 0$, 
		$\bx^\star  \gets \bx\in\mathcal{X}$, 
		$y^\star \gets \mathcal{J} \left( \bx^\star \right)$
		\Repeat
		\State \textbf{Generate} the sample set $\mathcal{X}^{[t]} = \{\bx_{k} \sim q(\bx; \bbtheta^{[t])} \}_{k=1}^K $
		\State \textbf{Compute} performances set $\mathcal{Y}^{[t]} =\left\lbrace y_k = \mathcal{J} \left( \bx_k\right)\right\rbrace_{k =1}^K$
		\State \textbf{Sort} $\mathcal{Y}^{[t]}$ in ascending order: $y_{(1)} \leq \dots \leq y_{(K)}$ 
	    \State \textbf{Maxim-um/izer update}\\ \quad \quad \textbf{if} $ y^\star< y_{(K)}$  \textbf{then} $y^\star\gets y_{(K)}$ $\mathbf{x}^\star\gets \bx_{(K)}$
		\State \textbf{Set} the threshold $y^{[t]} \gets y_{(\lceil (1-\rho) K \rceil)}$
		\State \textbf{Identify} the Elite Set: $\mathcal{E}^{[t]} = \{ \bx_{k}\in\mathcal{X}^{[t]} \mid y_k \geq y^{[t]} \}$
		\State \textbf{Compute}:  $\E_{\scriptsize\bbtheta^{[t]}}[\bt(\bX)] = \frac{1}{|\mathcal{E}^{[t]}|} \sum_{\bx \in \mathcal{E}^{[t]}} \bt(\bx)$
		\State \textbf{Update}: $\boldsymbol{\theta}^{[t+1]}$ based on $\E_{\scriptsize\bbtheta^{[t]}}[\bt(\bX)]$, $t \gets t + 1$
		\Until{\textit{Convergence}}
		\State \textbf{Output:} $\bbtheta^{\star}\gets\bbtheta^{[t]}$, $y^\star\gets y^{[t]}$, $\bx^\star\sim q(\bx, \bbtheta^{\star}) $
	\end{algorithmic}
\end{algorithm}
\noindent In each iteration, Algorithm \ref{algo:basic} applies the following steps:
\begin{enumerate}
	\item \textbf{Sampling step:} generate the sets $\mathcal{X}^{[t]}=\{\bx_{k}\}_{k=1}^{K}$ and $\mathcal{Y}^{[t]}=\{y_{k}=\mathcal{J}(\bx_{k})\}_{k=1}^{K}$ by sampling $K$ feasible points according to $q(\bx,\bbtheta^{[t]})$, and evaluating the corresponding objective values. 
	\item \textbf{Level definition:} set the level $y^{[t]}=y(\lceil (1-\rho) K\rceil)$, with $y(\lceil (1-\rho)  K\rceil)$ denoting the $\lceil(1-\rho)  K\rceil$-th largest element in $\mathcal{Y}^{[t]}$. This ensures that $\mathbb{P}(Y_{k}\geq y^{[t]})\geq \rho$, with $Y_{k}$ a random element from $\mathcal{Y}^{[t]}$. 	
	\item \textbf{Parameter update:} Define the \emph{elite set} at iteration $t$ as the set $\mathcal{E}^{[t]}=\{\bx_{k}\in\mathcal{X}^{[t] }| \mathcal{J}(\bx_{k})\geq y^{[t]}\}$ Then, compute \eqref{Eq:OptParametersMC2} based on $\mathcal{E}^{[t]}$, and update the density parameter vector $\boldsymbol{\theta}^{[t+1]}$ accordingly. Thus, $\mathcal{E}^{[t]}$ contains exactly $\lceil\rho K\rceil$ samples and the distribution parameters are updated based only on the elite samples. This iteratively refines the sampling distribution, increasing the threshold value $y^{[t]}$ for the considered probability value $\rho$, eventually fulfilling \eqref{Eq:ProbStochastic} with the highest possible threshold $y$. 
\end{enumerate}
\begin{Rmk}\label{Rmk:WrappedTruncated}
	If a pdf $q(\mathbf{x}, \boldsymbol{\theta})$ is restricted to a compact support $D$ (e.g., a truncated Gaussian), it remains a member of the $m$-EF by redefining the base measure as $\tilde{h}(\mathbf{x}) = h(\mathbf{x}) \mathds{1}_D(\mathbf{x})$. Since the exponential structure is preserved, the optimality condition $\nabla_{\boldsymbol{\theta}} \tilde{A}(\boldsymbol{\theta}) = \E_{\boldsymbol{\theta}}[\mathbf{t}(\mathbf{X})]$ still holds. In this case, the proof in Appendix carries over by defining $\tilde{A}(\boldsymbol{\theta}) = -\ln \tilde{c}(\boldsymbol{\theta})$, instead of $A(\boldsymbol{\theta}) = -\ln c(\boldsymbol{\theta})$. 
\end{Rmk}

Although Algorithm \ref{algo:basic} converges, its convergence speed can be improved by the following proposed modifications. 
\paragraph{Variable sample size $N$} this modification is motivated by the observation that, as the algorithm approaches convergence, the probability of finding elite samples may decrease, thus making the estimation of the parameters in $\boldsymbol{\theta}$ less reliable. This can be avoided by allowing the sample size $N$ \emph{to vary in each iteration}. Specifically, at each iteration, the algorithm validates the innovation of the lower bound ($y^{[t]} > y^{[t-1]}$) starting from a minimal sample size $N_0$. If the improvement is not detected under the current density $q(\mathbf{x}; \boldsymbol{\theta})$, a \textit{sub-iteration} is triggered, in which the number of samples $N^{[t]}$ is increased by a fixed increment $\Delta N$. This \textit{batch expansion} continues until either a successful innovation is verified or the pre-defined maximum sample budget $N_{max}$ is reached. Applying this approach, in each iteration the rarity threshold is implicitly set to $\rho = N_{elite}/N^{[t]}$, so that the cardinality of the elite set is fixed to $N_{elite}$. In this formulation, the sample $x_{(N^{[t]})}$ represents the best elite sample in iteration $t$, while the $N_{elite}$-th order statistic defines the $\rho^{[t]}$-quantile of the sample set.
\paragraph{Dynamic smoothing} Another modification that is adopted as a regularization of the algorithm, is the use of a smoothing update for the parameter vector. Namely, the new parameter vector is updated as a convex combination of the parameter vector $\tilde{\boldsymbol{\theta}}^{[t+1]}$ that results from \eqref{Eq:OptParametersMC2}, and the parameter vector at the previous iteration $\boldsymbol{\theta}^{[t]}$, namely $\boldsymbol{\theta}^{[t+1]} = (1-\alpha) \tilde{\boldsymbol{\theta}}^{[t+1]} + \alpha \boldsymbol{\theta}^{[t]}$, where $\alpha \in (0, 1]$. The impact of this regularization depends on the nature of the smoothed parameter: 
if applied to the \textit{mean}, it dampens the oscillations of the distribution's center induced by the stochasticity of the elite set; 
if applied to the \textit{covariance matrix}, it delays the premature collapse of the density (\textit{narrowing}), maintaining sufficient exploration of the search space. 
This mechanism ensures a robust trajectory toward the optimal point of the parameter space, balancing historical memory with new sample information.
These modifications are included in Algorithm \ref{algo:vf}, which represents a more efficient version of Algorithm \ref{algo:basic}. 
\begin{algorithm}
	\caption{CE Method for Maximization - Efficient vers.}
	\small
	\label{algo:vf}
	\begin{algorithmic}[1]
		\State \textbf{Input:} $\boldsymbol{\theta}^{[0]}, N^{[0]}, N_{max}, N_{elite}, \Delta N, $	$\alpha$ \textit{(Optional)}
		\State \textbf{Initialize:} 
		$t \gets 0$, 
		$N^{[t]} \gets N^{[0]}$, 
		$\mathbf{x}^\star \gets \widehat{\boldsymbol{\mu}}(\boldsymbol{\theta}_0)$, 
		$y^\star \gets \mathcal{J}(\bx^\star)$, 
		$y^{[t-1]} \gets y^\star$
		\Repeat
		\State \textbf{Generate:} $\mathcal{X}^{[t]} = \{\bX_{k} \sim q(\bx; \boldsymbol{\theta}^{[t]}) \}_{k=1}^{N^{[t]}} $
		\State \textbf{Evaluate:} $\mathcal{Y}^{[t]} = \{ \mathcal{J}(\bX_k) \}_{k=1}^{N^{[t]}}$
		\State \textbf{Sort:} $\mathcal{Y}^{[t]}$ in $Y_{(1)} \leq \dots \leq Y_{(N^{[t]})}$  \label{step:generate}
		\State \underline{\textbf{Level Update:}} $y^{[t]} \gets Y_{(k^{[t]})}$ $k^{[t]} = N^{[t]} - N_{elite} + 1$
		\State \textbf{Identify Elite Set:} $\mathcal{E}^{[t]} = \{ \bX_{(k^{[t]})}, \dots, \bX_{(N^{[t]})} \}$
		\State \textbf{Batch Expansion} \textit{(Conditional)}:
		\quad\quad\If{$y^{[t]} < y^{[t-1]}$ \textbf{and} $N^{[t]} < N_{max}$}
		\State \textbf{Generate:} $\Delta \mathcal{X} = \{ \bX_k \sim f(\bx); \boldsymbol{\theta}^{[t]}) \}_{k=N^{[t]}+1}^{N_t + \Delta N }$ 
		\State \textbf{Evaluate:} $\Delta \mathcal{Y} = \{ \mathcal{J}(\bX_k) \mid \bX_k \in \Delta \mathcal{X} \}$
	  \State \textbf{Merge:} $\mathcal{X}^{[t]} \gets \mathcal{X}^{[t]} \cup \Delta \mathcal{X}$ and $\mathcal{Y}^{[t]} \gets \mathcal{Y}^{[t]} \cup \Delta \mathcal{Y}$
		\State $N^{[t]} \gets N^{[t]} + \Delta N$; $\rho^{\star}=\frac{N_{elite}}{N^{[t]}}$;
		\State \textbf{goto}  step \ref{step:generate}
		\EndIf
		\State \textbf{Maxim-um/izer Update}\\
		\quad  \quad \textbf{if} $Y_{(N^{[t]})} > y^\star$ \textbf{then} $y^\star \gets Y_{(N^{[t]})}, \bx^\star \gets \bX_{(N^{[t]})}$		
		\State \underline{\textbf{Parameter Update}} \textit{(CE Solution)}:
		\State \textbf{Compute}:
		\begin{equation}\label{Eq:Update2}
		\E_{\scriptsize\bbtheta^{[t]}}[\bt(\bX)] =\frac{\sum_{k=1}^{K}\mathds{1}_{y}(\bx_{k})\bt(\bx_{k})}{\sum_{k=1}^{K}\mathds{1}_{y}(\bx_{k})}= \frac{1}{|\mathcal{E}^{[t]}|} \sum_{\bx \in \mathcal{E}^{[t]}} \bt(\bx)
		\end{equation}
		\State \textbf{Update}: $\boldsymbol{\theta}^{[t+1]}$ based on \eqref{Eq:Update2}
		\State \textbf{Smoothing} \textit{(Optional)}  $\boldsymbol{\theta}^{[t+1]} \gets (1-\alpha) \boldsymbol{\theta}^{[t+1]} + \alpha \boldsymbol{\theta}^{[t]}$
		\State \textbf{Reset:}  $y^{[t-1]} \gets y^{[t]}$,  $N^{[t]} \gets N^{[0]}$,  $t \gets t + 1$
		\Until{\textit{Convergence}}
		\State \textbf{Output:} $\bx^\star, y^\star, \rho^{\star}$
	\end{algorithmic}
\end{algorithm}	
\begin{Rmk}\label{Rem:DiscreteCE}
The parametric approach that led to Problem \eqref{Prob:Opt_theta} is motivated by the fact that the distribution $q$ to optimize is a continuous function. This is a direct consequence of the fact that Problem \eqref{Prob:Problem_CE} has a continuous optimization variable. If, instead, a discrete problem is to be solved, then the function $q$ could be taken as a probability mass function (PMF) and Problem \eqref{Prob:Opt_q} would reduce to the discrete problem of optimizing the vector of the PMF values. This is the CE-based approach that has been used in related studies that employed the CE method for the optimization of RIS-aided networks, e.g. \cite{JChen2026}. The numerical analysis in Sec. \ref{sec:numerical_results} will show that the continuous version of the CE method enjoys a significantly lower complexity than its discrete counterpart. 
\end{Rmk}
\paragraph{Convergence}	\label{par:stop_criteria}
Let us now formally prove the convergence of the proposed modified version of the CE method, as outlined in Algorithm \ref{algo:vf}.
We consider two instances of Algorithm \ref{algo:vf}: \emph{(i)} \textit{CE-$\boldsymbol{\mu}$} in which $\bbtheta=\boldsymbol{\mu}$, i.e. only the distribution mean is updated; \emph{(ii)} \textit{CE-$(\boldsymbol{\mu},\boldsymbol{\sigma})$}, in which $\bbtheta=[\boldsymbol{\mu}, \boldsymbol{\sigma}]$, i.e. both the distribution mean and  variance are updated.
\begin{Thm}[Convergence of Algorithm \ref{algo:vf}]
	\label{thm:convergence_CE}
	Assume:
	\begin{enumerate}
		\item[\textbf{(A1):}] \label{ass:domain}  $\mathcal{X}$ is a compact 
		metric space.
		
		\item[\textbf{(A2):}] \label{ass:objective}  $\mathcal{J}: 
		\mathcal{X} \to \mathbb{R}$ is measurable, bounded above, and attains its 
		maximum $y^\star$ on a non-empty set $\mathcal{X}^\star \subset \mathcal{X}$.
		
		\item[\textbf{(A3):}] \label{ass:family} 
		$f_{\boldsymbol{\theta}}$ belongs to the $m$-EF \eqref{def:NEF_form} with 
		$\boldsymbol{\theta}\in\Theta \subseteq \mathbb{R}^m$.
		
		\item[\textbf{(A4):}]  \label{ass:support}
		For every $\delta > 0$ and every $t$, $P\!\left(\mathcal{J}(\bX_k) > y^\star - \delta\right) > 0$, 
		where $\mathbf{X}_k \sim f(\,\cdot\,,\boldsymbol{\theta}^{[t]})$ is generated 
		by the sampling mechanism of Algorithm~\ref{algo:vf}.

		\item[\textbf{(A5):}] \label{ass:smoothing}  Smoothing is applied only to the parameter $\boldsymbol{\sigma}$, in the \textit{CE-$(\boldsymbol{\mu},\boldsymbol{\sigma})$} variant. In any case $\sigma^{[t]} > 0$ for all $t$ whenever $N_{elite} \geq 2$ 
		and elite samples are not all identical.
		
		\item[\textbf{(A6):}] \label{ass:feasibility}  Every 
		generated sample $\mathbf{X}_k \in \mathcal{X}$ by construction of the 
		sampling mechanism.
	\end{enumerate}
	
	\noindent Then, as $t \to \infty$, Algorithm~\ref{algo:vf} satisfies almost 
	surely (a.s.):
	\begin{enumerate}
		\item[\textbf{(C1)}] \label{conv:level} $y^{[t]}$ converges to $y^\star$ from below, i.e. 
		$y^{[t]} \nearrow y^\star$;
		\item[\textbf{(C2)}] \label{conv:param}  $\boldsymbol{\theta}^{[t]}\!\to \!\boldsymbol{\theta}^\star$, where $\boldsymbol{\theta}^{\star}$ 
	fulfills \eqref{Eq:ProbStochastic} with $y\!=\!y^{\star}$ and $\rho=\rho^{\star}$.		
		\item[\textbf{(C3)}] \label{conv:optimizer}  $\mathcal{J}(\bx^\star) 
		\to y^\star$.
	\end{enumerate}	
	\begin{proof}
		\textbf{(C1).} 
		The proof generalizes stochastic approximation arguments from  
		\cite{homem_et_rubinstain_2002estimation}, \cite{shapiro1996simulation} to the case of Algorithm \ref{algo:vf}. To elaborate, 
		the elite threshold $y^{[t]} = Y_{(k^{[t]})}$ with 
		$k^{[t]} = N^{[t]} - N_{elite} + 1 \in \{1,\dots,N^{[t]}\}$ is a well-defined order 
		statistic by construction of Algorithm~\ref{algo:vf}. For any $\delta > 0$, 
		\textbf{(A4)} gives $P\!\left(\mathcal{J}(\bX_k) > y^\star - \delta\right) 
		=: p_\delta> 0$. Among $N^{[t]}$ draws, at least one satisfies 
		$\mathcal{J}(\bX_k) > y^\star - \delta$ with probability 
		$1-(1-p_\delta)^{N^{[t]}} \to 1$, so $Y_{(N^{[t]})} \to y^\star$ a.s. Then, the
		monotonicity of $\{y^{[t]}\}$ ensures that $y^{[t]} \nearrow y^\star$ a.s.
		
		\smallskip
		\textbf{(C2).}
		By construction, Problem \eqref{Prob:Opt_theta} is globally solved in each iteration. By the 
		strong law of large numbers, the compactness of the feasible set \textbf{(A1)}, and standard 
		M-estimation theory \cite{rubinstein_et_shapiro_1993discrete}, it follows that 
		$\frac{1}{|\mathcal{E}|}\sum_{\scriptsize{\bx}_{k}\in\mathcal{E}}\bt(\bx_{k})$ tends to	$\E[\bt(\bX)]$
		 uniformly a.s., so 
		$\boldsymbol{\theta}^{[t]} \to \boldsymbol{\theta}^\star$ a.s. by 
		identifiability of the $m$-EF \cite{rubinstein_et_shapiro_1993discrete}.
		As for CE-$\boldsymbol{\mu}$, the variance $\boldsymbol{\sigma}$ is fixed and only 
		$\boldsymbol{\mu}{^{[t]}} \to \boldsymbol{\mu}^\star$ is updated, with $\alpha = 0$; then, the 
		convexity of the set $\Theta$ \textbf{(A3)} is preserved trivially.
		As for CE-$(\boldsymbol{\mu},\boldsymbol{\sigma})$, smoothing on $\boldsymbol{\sigma}$ 
		keeps $\boldsymbol{\sigma}^{[t]} > 0$ and $\boldsymbol{\theta}^{[t+1]} \in \Theta$ for any $t$, and convergence follows by stochastic approximation 
		arguments from \cite{homem_et_rubinstain_2002estimation,shapiro1996simulation}. Then, upon convergence, \eqref{Eq:ProbStochastic} is fulfilled with $y=y^{\star}$ and $\rho=\rho^{\star}$. 
		
		\smallskip
		\textbf{(C3).} By \textbf{(A6)}, $\mathbf{x}^\star \in \mathcal{X}$ always. 
		Since $\mathcal{J}(\mathbf{x}^\star) = \max_{s \leq t} Y_{(N_s)}$ is 
		non-decreasing, bounded from above by $y^\star$, and its limit equals $y^\star$ by 
		\textbf{(C1)}, the conclusion follows.
		
		\smallskip
		\textit{Batch expansion.} The conditional expansion (Algorithm~\ref{algo:vf}, 
		steps 9-15) draws additional i.i.d. feasible samples from $f_{\boldsymbol{\theta}^{[t]}}$. The merged set still contains i.i.d. samples\ from 
		$f_{\boldsymbol{\theta}^{[t]}}$, so all arguments above apply without modification.
	\end{proof}
\end{Thm}
\noindent Proposition \ref{thm:convergence_CE} ensures that Algorithm \ref{algo:vf} generates, a. s., a non-decreasing sequence of objective values, and converges both in the objective value $\{y^{[t]}\}_{t}$ and parameter vector $\{\boldsymbol{\theta}^{[t]}\}_{t}$. 	

\paragraph{Computational complexity} In each iteration of Algorithm \ref{algo:vf}, the bulk of the complexity is related to the identification of the elite set to compute \eqref{Eq:Update2}. Denoting by $V$ the number of optimization variables, in iteration $t$ it is necessary to sample $V$ feasible vectors of $N^{[t]}$ components each. The $N_{elite}$ samples of each of the $V$ vectors can be determined by the min-heap sort method, which has complexity $N^{[t]}\log(N_{elite})$ \cite{MinHeapSort}. Thus, the complexity of each iteration of Algorithm \ref{algo:vf} scales with $VN^{[t]}\log(N_{elite})$, which leads to the following asymptotic complexity of Algorithm \ref{algo:vf} 
\begin{align}\label{Eq:ComplexityCE}
\mathcal{C}_{\text{CE}}&=\mathcal{O}\left(\sum_{t=1}^{T}VN^{[t]}\log (N_{elite})\right)\notag\\
&\leq\mathcal{O}\left(TVN_{max}\log (N_{elite})\right)\;,
\end{align}
with $T$ the number of iterations to reach convergence. Thus, it can be seen that Algorithm \ref{algo:vf} enjoys a linear complexity in the number of optimization variables $V$. 

\subsection{Metropolis-Hastings maximization}
The MH algorithm \cite{hastings1970monte,metropolis1953equation} is a method for sampling probability distributions that are too complex to be sampled directly. It was originally conceived in statistical mechanics to estimate macroscopic quantities in high-dimensional systems. In this framework, the state of the system is modeled as a random vector $\bX$ whose target probability density $\pi(\bx)$ is defined \textit{a priori} according to the Boltzmann distribution:
\begin{equation}\label{Eq:MH_Stationary}
	\pi(\bx) = \frac{1}{Z} e^{-\beta U(\bx)}
\end{equation}
where $U(\bx)$ represents the \textit{potential energy} of the configuration, $Z$ is the partition function, and $\beta > 0$ is related to the temperature of the system. A fundamental property of this distribution is its intrinsic tendency to concentrate probability mass on states with lower energy, effectively centering the measure around the global minima of $U(\bx)$. This allows to make a connection with optimization problems. To elaborate, let us consider Problem \eqref{Prob:Problem_CE} and let us take $U(\bx)=-\mathcal{J}(\bx)$. Then, the samples from \eqref{Eq:MH_Stationary}, being with high probability close to the global minimum of $U(\bx)$, will also be close to the global maximum of $\mathcal{J}(\bx)$ with high probability. However, directly sampling from $\pi(\bx)$ is generally intractable due to the high dimensionality of the space $\mathcal{X}$, which motivates the use of the MH method to obtain samples from $\pi(\bx)$. 

The MH algorithm works by designing a Markov Chain Monte Carlo (MCMC) process by constructing transition probabilities, and generating accordingly a sequence of states $\{\bx_t\}$, such that the chain admits $\pi(\bx)$ as its unique stationary distribution. As for the existence of a stationary distribution, a sufficient condition is the so-called \emph{Detailed Balance} condition, which states that
\begin{equation}\label{Eq:DB_cond}
P(\bx|\by)\pi(\by)=P(\by|\bx)\pi(\bx)\;,
\end{equation}
wherein $P(\bx|\by)$ denotes the probability to transition from state $\by$ to state $\bx$. In order to fulfill this condition, the MH algorithm defines the transition probabilities as $P(\bx|\by)=q(\bx|\by)a(\bx,\by)$, with $q(\cdot|\cdot)$ being the \textit{proposal distribution} and $a(\cdot,\cdot)$ the probability of accepting the new state $\bx$, when the current state is $\by$. Then, \eqref{Eq:DB_cond} can be rewritten as 
\begin{equation}
\frac{a(\bx,\by)}{a(\by,\bx)}=\frac{q(\by|\bx)}{q(\bx|\by)}\frac{\pi(\bx)}{\pi(\by)}\;,
\end{equation}
which can be seen to be fulfilled, for any $q(\cdot|\cdot)$, by choosing 
\begin{equation}\label{Eq:stationaryDB}
a(\bx,\by)=\min\left\{1,\frac{q(\by|\bx)}{q(\bx|\by)}\frac{\pi(\bx)}{\pi(\by)}\right\}=\min\{1,\alpha\}\;.
\end{equation}
 The quantity $\alpha$ is called the \textit{Hastings ratio}, and, plugging \eqref{Eq:MH_Stationary} into \eqref{Eq:stationaryDB}, can be written as 
 \begin{equation}
\alpha=\frac{q(\by|\bx)}{q(\bx|\by)}\ds e^{\beta \Delta {\mathcal J}}\;,
 \end{equation}
 wherein $\Delta{\mathcal J}= {\mathcal J}(\bx)- {\mathcal J}(\by)$ gives the variation of the objective function as we move from $\by$ to $\bx$. The described mechanism can be implemented following a \textit{two-step} stochastic rule referred respectively as \textit{proposal generation} phase and \textit{acceptance/rejection} phase:
\begin{enumerate}
	\item\textbf{\textit{Proposal:}} Given the state $\bx^{[t]}$ at iteration $t$, a new candidate state $\bx$ is sampled from a proposal distribution $q(\bx | \bx_t)$.
	\item \textbf{\textit{Acceptance/Rejection:} }The transition is accepted with probability $\rho = \min(1, \alpha)$.
\end{enumerate}
We observe that, in the case of symmetric proposal distribution (i.e. when $q(\bx|\by) = q(\by|\bx)$), the ratio simplifies to $\pi(\bx)/\pi(\by)$, which is the criterion proposed in the first instance of the MH method \cite{metropolis1953equation}.

As for the uniqueness of the stationary distribution, it is necessary that the generated Markov chain is ergodic. To elaborate, we first recall the following definitions from \cite{MeynTweedie2009}.
\begin{Def}[$\pi$-\textit{Irreducibility}]
	The chain is said to be $\pi$-irreducible if, for every set $A$ with $\pi(A) > 0$, the probability of reaching $A$ starting from any $\mathbf{x} \in \mathcal{X}$ is \textit{strictly positive} in a \textit{finite number} of steps:
\end{Def}
\begin{Def}[\textit{Aperiodicity}]
	An irreducible chain is \textit{aperiodic} if there is no partition of the state space into disjoint sets $\mathcal{X} = D_1 \cup \dots \cup D_d$ (with $d \ge 2$) such that the chain moves deterministically in a cycle $D_1 \to D_2 \to \dots \to D_1$.  \\
	In MH schemes, a non-zero rejection probability is a \textit{sufficient condition} for aperiodicity.
\end{Def}
\begin{Def}[\textit{Positive Harris Recurrence}]
	A chain is Harris recurrent if it visits any set $A$ with $\pi(A) > 0$ infinitely often with probability $1$ for all starting states $\bx \in \mathcal{X}$. The chain is \textit{positive} if it admits a unique invariant probability measure $\pi$, which implies finite expected return times to accessible sets.
\end{Def}
\paragraph{Convergence} Based on the definitions above, we have the following result, which ensures the convergence of the sequence of accepted states generated by the MH method to a sample from the unique stationary distribution, from any initial state and with any proposal distribution.
\begin{proposition}
	\label{thm:ergodic_theorem}
	If a Markov chain is $\pi$-irreducible, aperiodic, and positive Harris recurrent, then it admits $\pi$ as its unique stationary distribution. For every starting state $\mathbf{x}_0 \in \mathcal{X}$, the distribution of the chain converges in total variation distance:
	\begin{equation}
		\lim_{t \to \infty} \| P^{[t]}(\bx_0, \cdot) - \pi(\cdot) \|_{TV} = 0
	\end{equation}
\end{proposition}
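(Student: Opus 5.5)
This is the classical Harris ergodic theorem from general state-space Markov chain theory, as in \cite{MeynTweedie2009}. I would reproduce its standard proof, organized around the Nummelin splitting technique and a coupling argument, and not invent anything new.

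\emph{Uniqueness of $\pi$.} First show that $\pi$ is invariant; for the MH kernel this follows from the detailed balance condition \eqref{Eq:DB_cond} by integrating both sides over $\by$. Next, $\pi$-irreducibility implies the existence of a small set $C$ with $\pi(C)>0$: there are an integer $n_0$, a constant $\epsilon>0$ and a probability measure $\nu$ with $P^{n_0}(\bx,\cdot)\geq \epsilon\,\nu(\cdot)$ for all $\bx\in C$. Using this minorization, the Nummelin split chain has an atom $\check{\alpha}$. Harris recurrence makes $\check{\alpha}$ recurrent. Any invariant probability measure $\tilde\pi$ then admits the regenerative representation
\begin{equation}
\tilde\pi(A)\propto \E_{\check\alpha}\Big[\sum_{k=0}^{\tau_{\check\alpha}-1}\mathds{1}_A(\bx_k)\Big].
\end{equation}
Since this expression does not depend on $\tilde\pi$, the invariant measure is unique and equals $\pi$. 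Positivity is exactly the statement that $\E_{\check\alpha}[\tau_{\check\alpha}]<\infty$, so the measure normalizes to a probability.

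\emph{Convergence in total variation.} Run two copies of the split chain, one started at $\bx_0$ and one started from $\pi$, and couple them so that they evolve independently until both hit the atom at the same time; after that they move together. The coupling inequality gives
\begin{equation}
\|P^{[t]}(\bx_0,\cdot)-\pi\|_{TV}\leq 2\,\mathbb{P}(T_{c}>t),
\end{equation}
where $T_c$ is the coupling time. It therefore suffices to show that $T_c<\infty$ almost surely. The return times to $\check\alpha$ form a renewal process with finite mean, by positivity. Aperiodicity makes the distribution of the inter-renewal time aperiodic, meaning the gcd of its support is $1$. The renewal theorem, or equivalently the recurrence of the difference of two independent aperiodic positive renewal processes, then yields simultaneous renewals a.s. Harris recurrence ensures that the chain started at an arbitrary $\bx_0$ reaches $\check\alpha$ a.s., rather than only for $\pi$-almost every starting point. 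This is why the limit holds for \emph{every} $\bx_0$.

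\emph{Main obstacle.} The delicate step is constructing the small set and the split chain from mere $\pi$-irreducibility on a general state space, and transferring aperiodicity of the original chain to aperiodicity of the atom's renewal sequence. I would not redo this in full. Instead I would invoke the existence of small sets, the splitting construction and the aperiodic ergodic theorem as established in \cite{MeynTweedie2009}, and verify carefully only the two links specific to our setting. The first is that detailed balance gives invariance. The second is that a non-zero rejection probability gives $P(\bx,\{\bx\})>0$ on a set of positive $\pi$-measure, and hence strong aperiodicity of the atom.
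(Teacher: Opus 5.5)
Your overall route matches the paper's. The paper gives no proof of this proposition and simply recalls it as the aperiodic ergodic theorem of \cite{MeynTweedie2009}. Your split-chain/coupling outline is the standard proof of that theorem, with the hard steps deferred to the same source.

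One of the two links you single out for your own verification is wrong. A non-zero rejection probability does \emph{not} give strong aperiodicity, i.e.\ a set $C$ of positive measure with $P(\bx,\cdot)\geq\epsilon\,\nu(\cdot)$ for all $\bx\in C$ and $\nu(C)>0$. Rejection only contributes the term $r(\bx)\delta_{\bx}$ to $P(\bx,\cdot)$, which is a different point mass at each $\bx$. On any $C$ with more than one point, these terms admit no non-trivial common minorant. What holding does give is ordinary aperiodicity. The set $\{r>0\}$ has positive measure, so it meets some cyclic class $D_i$, and for $\bx$ there $P(\bx,\{\bx\})>0$ is incompatible with $P(\bx,D_{i+1})=1$ when $d\geq 2$. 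Equivalently, if $r\geq r_0>0$ on your $n_0$-small set $C$, then $C$ is also $(n_0+1)$-small, so the period is $1$.

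This matters for your sketch because you minorize $P^{n_0}$ but then treat $\check\alpha$ as an atom of $P$ itself. For $n_0>1$ the splitting produces an atom only for the $n_0$-skeleton. Moreover, the bound $\check P(\check\alpha,\check\alpha)=\epsilon\,\nu(C)>0$, which makes the renewal sequence aperiodic, is available only for a one-step small set. The standard repair, which is how Meyn--Tweedie proceed, has three steps: use that an aperiodic $\psi$-irreducible chain has a strongly aperiodic $m$-skeleton; run your coupling for $P^m$; and extend to all $t$ because $t\mapsto\|P^{[t]}(\bx_0,\cdot)-\pi\|_{TV}$ is non-increasing.

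For the MH chain of Case 1, strong aperiodicity (indeed Doeblin's condition with $C=\mathcal{X}$) does hold, but for a different reason than rejection. The wrapped Gaussian proposal is bounded below by some $q_{\min}>0$ on the compact torus and $\mathcal{J}$ is bounded, so $q(\bx|\by)a(\bx,\by)\geq q_{\min}e^{-\beta(\max\mathcal{J}-\min\mathcal{J})}$ uniformly.

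Finally, detailed balance identifies the invariant law only for MH kernels. For the proposition as stated, invariance of $\pi$ must come from the hypotheses (the paper builds it into its definition of positivity), since an irreducibility measure need not be invariant.
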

\noindent The general MH procedure is reported in Algorithm \ref{algo:MH_final}, whose convergence is guaranteed by Proposition \ref{thm:ergodic_theorem}.
\begin{algorithm}
\small
	\caption{Metropolis-Hastings (Loss-Adaptive)}
	\label{algo:MH_final}
	\begin{algorithmic}[1]
		\State \textbf{Input:} $\mathcal{J}(\cdot)$, $q(\cdot|\cdot)$, $\beta$, $T$ (max iterations)
		\State \textbf{Initialize:} $t=0, \mathbf{x}^{[t]}, \mathbf{x}^\star \gets \mathbf{x}^{[t]}, y^{[t]} = \mathcal{J}(\mathbf{x}^{[t]}), y^\star \gets y^{[t]}$
		\Repeat
		\State \textbf{Sample} $\mathbf{x} \sim q(\cdot | \mathbf{x}^{[t]})$ and \textbf{Compute} $F = \mathcal{J}(\mathbf{x})$
		\State \textbf{if} $F > y^\star$ \textbf{then} $(\mathbf{x}^\star, y^\star) \gets (\mathbf{x}, F)$ \hfill (Global update)
		\State $\Delta \mathcal{J} \gets F - y^{[t]}$
		\State $\alpha \gets \exp(\beta \Delta \mathcal{J}) \cdot \frac{q(\mathbf{x}^{[t]} | \mathbf{x})}{q(\mathbf{x} | \mathbf{x}^{[t]})}$ \Comment{Hastings ratio}
		\State $\rho \gets \min(1, \alpha)$
		\State \textbf{Sample} $u \sim \mathcal{U}(0,1)$
		\If{$u < \rho$} $(\mathbf{x}^{[t+1]}, y^{[t+1]}) \gets (\mathbf{x}, F)$ \hfill (Accept)
		\Else $\quad (\mathbf{x}^{[t+1]}, y^{[t+1]}) \gets (\mathbf{x}^{[t]}, y^{[t]})$ \hfill (Reject)
		\EndIf
		\State $t \gets t + 1$
		\Until {Convergence} or $t = T$
		\State \textbf{Output:} $\mathbf{x}^\star$, $y^\star$
	\end{algorithmic}
\end{algorithm}
\begin{Rmk}[Role of the parameter $\beta$]
	The inverse temperature $\beta$ governs the trade-off between \textbf{exploration} and \textbf{exploitation}. In the limit $\beta \to 0$, the algorithm reduces to a pure random walk. Conversely, as $\beta \to \infty$, the sampler becomes increasingly greedy, accepting only moves that strictly improve the objective. Tuning $\beta$ allows escaping local maxima while ensuring concentration around global peaks.
\end{Rmk}
\begin{Rmk}
The optimality properties of the MH method are weaker than those of the CE method, since the MH method does not guarantee to yield, upon convergence, a point that provides guaranteed level of the objective value with high probability. On the other hand, it exhibits a lower computational complexity than the CE method, as discussed below, and supported by the numerical analysis in Sec. \ref{sec:numerical_results}.
\end{Rmk}
\paragraph{Complexity} Unlike the CE method, Algorithm \ref{algo:MH_final} draws only one sample $\bx$ from the feasible set in each iteration, and therefore the complexity of each iteration scales linearly with the length $\bx$, i.e. the number of optimization variables $V$. Thus, if $T$ iterations are required to reach convergence, the overall asymptotic complexity of the MH method is given by 
\begin{equation}\label{Eq:ComplexityMH}
{\mathcal C}_{MH}=\mathcal{O}\left(TV\right)\;,
\end{equation} 
which is linear in $V$, as the complexity of the CE method given in \eqref{Eq:ComplexityCE}, but with a smaller coefficient that multiplies $V$.

\section{Application to RIS-aided Wireless Systems}\label{sec:case_study}\label{s:case_study}
The considered framework is extremely general, requiring only very mild assumptions on objective and constraint functions, and, thus, can be applied to a wide range of scenarios. As examples, this section shows how both the CE and MH methods can be applied to two instances of RIS-aided wireless communication networks. Specifically, both scenarios consider the uplink of a wireless network in which $K$ single-antenna mobile terminals communicate with a base station (BS) equipped with $N_{R}$ antennas. The mobile users reach the BS through an RIS equipped with $N_{RIS}$ elements. In both cases, the goal is to optimize the system EE with respect to the RIS reflection coefficients. However, the type of metasurface that is employed is different in each case.
\begin{itemize}
\item\textbf{Case 1:} The RIS is \emph{nearly-passive}, with the $n$-th element expressed as $\gamma_{n}=e^{j\phi_{n}}$, $\phi_{n}\in[0,2\pi]$, for all $n=1,\ldots,N$. In this case, let us define $\bphi=[\phi_{1},\ldots,\phi_{N}]$.
\item\textbf{Case 2:} The RIS is \emph{active} through the use of an analog amplifier. In this case, $\gamma_{n}$ is a complex number whose modulus and phase can both be optimized. 
\end{itemize}
In both cases, we define the $N_{RIS}\times N_{RIS}$ matrix as $\bGamma=\text{diag}(\gamma_{1},\ldots,\gamma_{N_{RIS}})$. The two scenarios are further elaborated and treated separately in the next two subsections. 
\subsection{Case 1. Nearly-passive RIS}\label{subsec:case1}
Let us denote by $p_{k}$, $\bh_{k}$, and $\bc_{k}$, the $k$-th user's transmit power, $N_{RIS}\times 1$ channel to the RIS, and $N_{R}\times 1$ receive filter, while $\bG$ and $\sigma_{BS}^{2}$ denote the $N_{R}\times N$ channel between the RIS and the BS, and the thermal noise power at the BS. Since the RIS is nearly-passive, it does not consume any radio-frequency power, but only static power required to configure the phase shifts. Thus, there is no noise amplification at the RIS, and the achievable rate enjoyed by user $k$ is
\begin{equation}
R_{k}=\log_{2}\left(1+\frac{p_{k}|\bc_{k}^{H}\bG\bGamma\bh_{k}|^{2}}{\sigma_{BS}^{2}\|\bc_{k}\|^{2}+\sum_{\ell\neq k}p_{\ell}|\bc_{k}^{H}\bG\bGamma\bh_{\ell}|^{2}}\right)\;, 
\end{equation}
which, upon optimal linear minimum mean squared error (LMMSE) reception, becomes
\begin{equation}
R_{k}=\log_{2}\!\left(\!1\!+\!p_{k}\bv_{k}^{H}\!\left(\!\sum_{\ell\neq k}p_{\ell}\
\bv_{\ell}\bv_{\ell}^{H}\!+\!\sigma_{BS}^{2}\bI_{N_{R}}\!\right)^{-1}\!\!\!\bv_{k}\right), 
\end{equation}
with $\bv_{\ell}=\bG\bGamma\bh_{\ell}$ for all $\ell=1,\ldots,K$. 
Therefore, the sum-rate maximization problem is formulated as 
\begin{subequations}\label{Prob:Case1_EE}
\begin{align}
&\ds\max_{\bphi}\sum_{k=1}^{K}R_{k}(\bphi)\label{Prob:Case1_EEa}\\
&\;\text{s.t.}\;\phi_{n}\in[0,2\pi]\;,\forall\; n=1,\ldots,N\label{Prob:Case1_EEb}
\end{align}
\end{subequations}
Despite its apparent simplicity, Problem \eqref{Prob:Case1_EE} is a non-convex problem for which no known method exists to obtain the optimal solution with polynomial complexity. Indeed, all approaches proposed in the literature are based on sub-optimal, iterative approaches, among which, the simplest one is the AO, which alternatively  optimizes one variable $\phi_{n}$ at a time, until convergence is reached. However, this approach still requires solving $NI$ scalar problems wherein $I$ is the number of iterations over the complete set of $N$ variables that are required to reach convergence. Moreover, each scalar problem is a non-convex problem which requires a line search in the set $[0,2\pi]$ or suboptimal gradient-based techniques, which further increase the complexity. Other widely-used approaches in the literature are based on sequential optimization methods and/or semi-definite relaxation approaches, which, however, have higher, although polynomial, complexity than AO. The rest of this section shows how to tackle \eqref{Prob:Case1_EE} by the CE and MH methods. 

\subsubsection{CE method for \eqref{Prob:Case1_EE}}\label{Sec:CE_1}
In order to apply the CE method, we set $\bx=\bphi$, and generate samples $\bphi\sim q(\cdot,\bbtheta)$. As sampling distribution, we consider the Von Mises PDF, for which it holds $\bt(\bphi)=[\cos(\bphi),\; \sin(\bphi)]$, and which is the typical choice the $m$-EF family to account for the periodic variables, like the periodicity of \eqref{Prob:Case1_EEa} in $\bphi$. Moreover, we consider both the CE version that updates mean and variance, i.e. CE-($\boldsymbol{\mu}, \boldsymbol{\sigma}$), and that which updates only the mean, i.e, CE-$\boldsymbol{\mu}$. As for the map between $\mathbb{E}[\bt(\bphi)]$ and the parameters of the Von Mises distribution, it requires the solution of a non-linear equation involving Bessel functions. Aiming at lower complexity approach, it is also possible to approximate the Von Mises PDF by a wrapped Gaussian PDF, recalling that the wrapped Gaussian tends to the Von Mises when the variance is small compared to $\pi$ \cite{jammalamadaka_and_sengupta_2001_topics}. This is the case of interest, because: (1) as for CE-$\boldsymbol{\mu}$, the variance needs not be updated, and thus can be set to any sufficiently small value so that the wrapped Gaussian approximation holds, while ensuring the generation of a statistically significant data sample in each iteration; (2) as for CE-($\boldsymbol{\mu}, \boldsymbol{\sigma}$) the  variance can be initialized to a sufficiently small value, which is then further reduced during the execution of the CE algorithm\footnote{By construction, the CE method iteratively updates mean and variance in order to center the mean over a maximizer of the objective function and progressively reduce the variance.}. Embracing the wrapped Gaussian approximation, the connection between $\mathbb{E}[\bt(\bphi)]$ and the mean and variance of the distribution is\footnote{Note that the same formula for the update of $\boldsymbol{\mu}$ holds for the Von Mises distribution, too.}  \cite{jammalamadaka_and_sengupta_2001_topics}:
\begin{equation}
\boldsymbol{\mu}=\angle{\left\{\sum_{\small\bphi_{k} \in \mathcal{E}_t} \frac{\bt(\bphi_{k})}{|\mathcal{E}_t|}\right\}}\;,\boldsymbol{\sigma}=\sqrt{-2\ln\left|\sum_{\small\bphi_{k} \in \mathcal{E}_t} \frac{\bt(\bphi_{k})}{|\mathcal{E}_t|}\right|}\;.
\end{equation}
\begin{proposition}\label{Prop:CE_Case1}
Given the formulation above, applying Algorithm \ref{algo:vf}, with either variant CE-$\boldsymbol{\mu}$ or CE-($\boldsymbol{\mu}, \boldsymbol{\sigma})$, converges and fulfills properties \textbf{(C1)}, \textbf{(C2)}, and \textbf{(C3)} of Proposition \ref{thm:convergence_CE}.
\end{proposition}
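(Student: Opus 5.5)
The plan is to reduce Proposition \ref{Prop:CE_Case1} to Theorem \ref{thm:convergence_CE}: check assumptions \textbf{(A1)}--\textbf{(A6)} for Problem \eqref{Prob:Case1_EE} with the Von Mises (or wrapped Gaussian) sampling family. Properties \textbf{(C1)}--\textbf{(C3)} then follow directly from the theorem.

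\textbf{Domain, objective and family.} For \textbf{(A1)}, the feasible set $[0,2\pi]^{N}$ is compact. Because the objective is $2\pi$-periodic in each $\phi_n$, we may equivalently take $\mathcal{X}$ to be the torus $\mathbb{T}^{N}$ with the geodesic metric, which is a compact metric space. For \textbf{(A2)}, the sum-rate $\sum_k R_k(\bphi)$ is continuous in $\bphi$. Indeed, $\bv_\ell=\bG\bGamma\bh_\ell$ is continuous, and $\sum_{\ell\neq k}p_\ell\bv_\ell\bv_\ell^H+\sigma_{BS}^2\bI_{N_R}$ is positive definite with smallest eigenvalue at least $\sigma_{BS}^2$, so its inverse is continuous. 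Continuity on a compact set gives boundedness, measurability, and attainment of $y^\star$ on a nonempty $\mathcal{X}^\star$ (Weierstrass). For \textbf{(A3)}, the Von Mises density $c(\kappa)e^{\kappa\cos(\phi-\mu)}$ can be rewritten as $c\,e^{\bbtheta^T[\cos\phi,\sin\phi]^T}$ with $\bbtheta=\kappa[\cos\mu,\sin\mu]^T$. Taking the product over components, it belongs to the $m$-EF with $m=2N$ and base measure $h=\mathds{1}_{[0,2\pi)}$. By Remark \ref{Rmk:WrappedTruncated}, restricting to the compact support keeps it in the $m$-EF, and $c(\bbtheta)<\infty$ for all $\bbtheta$ because the support is compact. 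For the wrapped Gaussian, I would argue that it approximates this member in the small-variance regime, noting that the $\boldsymbol{\mu}$-update formula is exact for both densities.

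\textbf{Support, positivity and feasibility.} \textbf{(A6)} holds because samples are taken modulo $2\pi$, so every sample lies in $\mathcal{X}$. The key condition is \textbf{(A4)}, which I expect to be the main obstacle. My approach is to show that at every iteration the sampling density is strictly positive on all of $\mathbb{T}^N$. For the Von Mises, $e^{\kappa\cos(\cdot)}\geq e^{-\kappa}>0$ whenever $\kappa<\infty$. For the wrapped Gaussian, the density is a sum of positive Gaussians and is positive whenever $\sigma>0$. By continuity of the objective, the set $\{\bphi:\sum_kR_k(\bphi)>y^\star-\delta\}$ is open and nonempty, so it has positive Lebesgue measure. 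Its probability is therefore positive. The difficulty lies in guaranteeing $\sigma^{[t]}>0$ (equivalently $\kappa^{[t]}<\infty$) at every iteration. For CE-$\boldsymbol{\mu}$ this is immediate, since $\boldsymbol\sigma$ is fixed. For CE-$(\boldsymbol{\mu},\boldsymbol{\sigma})$, I would argue as follows:
\begin{enumerate}
\item The elite samples are drawn from a density that is absolutely continuous with respect to Lebesgue measure, so with $N_{elite}\geq 2$ they are almost surely distinct.
\item Distinct angles have a mean resultant length strictly below $1$, so $\boldsymbol\sigma=\sqrt{-2\ln|\cdot|}>0$ almost surely.
\item Smoothing on $\boldsymbol\sigma$, as in \textbf{(A5)}, forms a convex combination of positive values and therefore preserves positivity.
\end{enumerate}
Induction over $t$ then gives \textbf{(A4)} and \textbf{(A5)} almost surely.

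With all assumptions verified, Theorem \ref{thm:convergence_CE} yields \textbf{(C1)}--\textbf{(C3)} for both variants. In step \textbf{(C2)} I would cite the identifiability of the Von Mises family, namely that the map from $\E[\bt]$ to $(\boldsymbol\mu,\kappa)$ is one-to-one through the Bessel ratio $I_1/I_0$, which is strictly increasing.
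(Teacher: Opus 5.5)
Your proposal is correct and follows the same route as the paper: verify \textbf{(A1)}--\textbf{(A6)} for Problem~\eqref{Prob:Case1_EE} with the Von Mises/wrapped-Gaussian sampling family, then invoke Theorem~\ref{thm:convergence_CE}. Your version is more careful than the paper's in places: it places the Von Mises density in the $m$-EF with $m=2N$ (the paper says NEF, although $\bt(\bphi)\neq\bphi$), and it justifies $\sigma^{[t]}>0$ through almost-sure distinctness of the elite angles and positivity-preserving smoothing, which the paper simply asserts.
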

\begin{proof}
In order to show the results, we verify that Assumptions \textbf{(A1)}-\textbf{(A6)} are fulfilled.  
\textbf{(A1)} holds because $[0,2\pi]^N_{RIS}$ is compact. \textbf{(A2)} holds because$\sum_{k=1}^{K}R_{k}(\bphi)$ is continuous on $[0,2\pi]^N_{RIS}$, 
	hence bounded and attains its maximum; 
\textbf{(A3)} holds because the Von Mises belongs to the NEF. 
\textbf{(A4)} holds because the because the Von Mises PDF assigns a positive probability density to every open
arc in $[0,2\pi)$. This is also true for the CE-($\boldsymbol{\mu}, \boldsymbol{\sigma}$) variant, because, in each iteration $t$, it holds $\sigma^{[t]} > 0$. \textbf{(A5)} holds for both CE-$\boldsymbol{\mu}$, because $\bar{\boldsymbol{\sigma}}$ is fixed and positive throughout, and CE-($\boldsymbol{\mu}, \boldsymbol{\sigma})$, because $\sigma^{[t]} > 0$ for all $t$ whenever $N_{elite} \geq 2$ and elite samples are not all identical; smoothing on $\boldsymbol{\sigma}$ prevents premature degeneracy. \textbf{(A6)} holds because all samples lie in $[0,2\pi]^N_{RIS}$ by construction.
\end{proof}

\subsubsection{MH method} \label{Sec:MH_1}
In order to apply the MH method, we set $\bx=\bphi$ and generate candidate samples from the distribution $q(\bphi|\boldsymbol{\mu})$, with $q$ being a wrapped multi-variate Gaussian over the interval $[0,2\pi]^{N_{RIS}}$, $\boldsymbol{\mu}$ the mean vector, and covariance matrix $\bSigma=\text{diag}(\sigma_{1}^{2},\ldots\sigma_{N_{RIS}}^{2})$. With this formulation, we can directly apply Algorithm \ref{algo:MH_final} to solve Problem \eqref{Prob:Case1_EE} and the following Proposition holds. 
\begin{proposition}\label{Prop:MH_Case1}
Given the formulation above, applying Algorithm \ref{algo:MH_final} generates a Markov chain that converges in total variation distance to its unique stationary distribution. 
\end{proposition}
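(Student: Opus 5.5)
The plan is to verify the three hypotheses of Proposition \ref{thm:ergodic_theorem} ($\pi$-irreducibility, aperiodicity, positive Harris recurrence) for the chain produced by Algorithm \ref{algo:MH_final}. The state space is the torus $\mathcal{X}=[0,2\pi)^{N_{RIS}}$, the proposal $q(\cdot|\bphi)$ is the wrapped Gaussian centred at the current state with covariance $\bSigma$, and the target is $\pi(\bphi)=Z^{-1}e^{\beta\sum_k R_k(\bphi)}$. Once these are checked, the conclusion follows directly from Proposition \ref{thm:ergodic_theorem}.

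\emph{Preliminaries.} The sum-rate $\sum_k R_k(\bphi)$ is continuous on the compact torus, as already used for \textbf{(A2)} in Proposition \ref{Prop:CE_Case1}. It is therefore bounded, with values in some interval $[R_{\min},R_{\max}]$. Consequently:
\begin{itemize}
\item $Z<\infty$, so $\pi$ is a well-defined density;
\item $\pi$ is bounded above and below by positive constants;
\item $\pi$ is equivalent to Lebesgue measure on $\mathcal{X}$.
\end{itemize}
The wrapped Gaussian with $\sigma_n>0$ is a sum of infinitely many positive Gaussian translates, so it is continuous and strictly positive everywhere. Since $\mathcal{X}$ is compact, we get the uniform lower bound $q(\bx|\by)\ge \epsilon_q>0$ for all $\bx,\by$. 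The proposal is also symmetric, $q(\bx|\by)=q(\by|\bx)$, because it depends only on $\bx-\by$ modulo $2\pi$ and the Gaussian is even. Hence the Hastings ratio reduces to $e^{\beta\Delta\mathcal{J}}$, and by construction \eqref{Eq:stationaryDB} gives detailed balance \eqref{Eq:DB_cond}. So $\pi$ is invariant.

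\emph{Irreducibility and aperiodicity.} For any state $\by$ and any Borel set $A$, the one-step kernel satisfies
\[
P(\by,A)\ \ge\ \int_A q(\bx|\by)\min\{1,e^{\beta(\mathcal{J}(\bx)-\mathcal{J}(\by))}\}\,d\bx\ \ge\ \epsilon_q\, e^{-\beta(R_{\max}-R_{\min})}\,\mathrm{Leb}(A).
\]
This is a global minorization (Doeblin) condition, $P(\by,\cdot)\ge \varepsilon\,\nu(\cdot)$, with $\nu$ the normalized Lebesgue measure. Irreducibility follows immediately: every set with $\pi(A)>0$ has positive Lebesgue measure and is reached in one step from any starting point. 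Aperiodicity also follows, since the set $A$ can be hit in consecutive steps $1,2,\dots$. Alternatively, the rejection probability is positive whenever some proposals decrease $\mathcal{J}$, which holds unless $\mathcal{J}$ is constant, a trivial case.

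\emph{Harris recurrence (main obstacle).} The main step is positive Harris recurrence, because it must hold from \emph{every} starting point, not just $\pi$-almost every one. I would not verify it abstractly. Instead, I would use the fact that the whole state space is a small set: the Doeblin bound shows $\mathcal{X}$ is $(1,\varepsilon,\nu)$-small. By the standard result in \cite{MeynTweedie2009} (uniform ergodicity, Thm.~16.0.2), this yields uniform ergodicity with
\[
\|P^{[t]}(\bx_0,\cdot)-\pi\|_{TV}\le (1-\varepsilon)^t
\]
for all $\bx_0$. Uniform ergodicity implies positive Harris recurrence and uniqueness of $\pi$.

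One could also note that an MH chain which is $\pi$-irreducible with an absolutely continuous proposal is automatically Harris recurrent (Tierney's lemma), which gives the same conclusion more directly. With all three hypotheses verified, invoking Proposition \ref{thm:ergodic_theorem} concludes the proof. As a bonus, the convergence rate is geometric.
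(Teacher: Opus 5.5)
Your proposal is correct and follows the same route as the paper. Both verify the hypotheses of Proposition~\ref{thm:ergodic_theorem} from three ingredients: the everywhere-positive wrapped Gaussian proposal (one-step irreducibility), a non-zero rejection probability (aperiodicity), and compactness of the phase domain. Where the paper only asserts that compactness yields positive Harris recurrence, your explicit Doeblin minorization $P(\by,\cdot)\ge\varepsilon\,\nu(\cdot)$ makes that step rigorous and also gives a uniform geometric rate.
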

\begin{proof}
	The result follows upon verifying the conditions of Proposition \ref{thm:ergodic_theorem}. To show this, we observe that the considered wrapped Gaussian density  $q(\bphi|\boldsymbol{\mu})$ is strictly positive over $[0, 2\pi)^N$ and over the entire domain $\mathcal{X}$. The chain is therefore \textit{strongly irreducible} (with parameter $n=1$). Moreover, the fact that the MH method has a non-zero probability of rejecting a candidate sample ensures \textit{aperiodicity}. Combined with the compactness of the feasible sets, these properties guarantee \textit{positive Harris recurrence} and, thus, unique convergence to the unique stationary distribution $\pi(\mathbf{x})$.
\end{proof}

\subsection{Case 2. Active RIS}\label{subsec:case2}
Let us now extend the previous approach to the case in which the RIS is equipped with analog amplifiers. In this case, due to the presence of the transmit amplifiers, the RIS consumes not only static power, but also radio-frequency power, and introduces additional thermal noise. Moreover, we deploy the RIS in the near-field of the BS array, as a reconfigurable holographic beamforming structure. This allows reducing the number of transmit antennas to $N_{R}=1$, while still ensuring large beamforming and EE gains  \cite{Fotock_RHB}. In this case, the achievable, rate of user $k$ can be shown to be
\begin{equation}\label{Eq:Rate_Case2}
	R_{k}\!=\!\log_{2}\!\left(\!1\!+\!\frac{p_{k}|\bg^{H}\bGamma\bh_{k}|^{2}}{\sigma_{BS}^{2}\!+\!\sigma_{RIS}^{2}\bg^{H}\bGamma\bGamma^{H}\bg\!+\!\sum_{\ell\neq k}p_{\ell}|\bg^{H}\bGamma\bh_{\ell}|^{2}}\right),
\end{equation}
wherein $\bg$ is the $N_{RIS}\times 1$ channel vector from the RIS to the BS, while $\sigma_{BS}^{2}$ and $\sigma_{RIS}^{2}$ are the thermal noise power at the BS and RIS, respectively. As for the power consumption, the input and output powers at the RIS are expressed as \cite{Fotock_RHB}
\begin{align}
	P_{in}&=\textstyle\sum_{k=1}^{K}p_{k}\|\bh_{k}\|^{2}+\sigma_{RIS}^{2}\\
	P_{out}&=\textstyle\sum_{k=1}^{K}p_{k}\|\bGamma\bh_{k}\|^{2}+\sigma_{RIS}^{2}\tr(\bGamma\bGamma^{H})
\end{align}
Thus, the radio-frequency power consumed by the RIS is the difference between the output and the input power, i.e. $P_{out}-P_{in}$, which is non-negative since the RIS is active, but can not be larger than the maximum power that the power amplifier can provide $P_{r,max}$. Denoting by $P_{c,n}$ the static power consumption of each RIS element, and by $P_{0}$ the static power consumption of all other nodes in the network, the total system power consumption is written as $P_{T}=P_{out}-P_{in}+\sum_{k=1}^{K}p_{k}+N_{RIS}P_{c,n}+P_{0}=\sum_{k=1}^{K}p_{k}\|\bGamma\bh_{k}\|^{2}+\sigma_{RIS}^{2}\tr(\bGamma\bGamma^{H})+P_{a}$, wherein $P_{a}=\sum_{k=1}^{K}p_{k}+N_{RIS}P_{c,n}+P_{0}-P_{in}$. Finally, the EE maximization problem can be stated as 
\begin{subequations}\label{Prob:Case2_EE}
	\begin{align}
		&\ds\max_{\bGamma}\;\frac{\sum_{k=1}^{K}R_{k}(\bGamma)}{\sum_{k=1}^{K}p_{k}\|\bGamma\bh_{k}\|^{2}+\sigma_{RIS}^{2}\tr(\bGamma\bGamma^{H})+P_{a}}\\
		&\;\text{s.t.}\;0\leq\textstyle\sum_{k=1}^{K}\!p_{k}\|\bGamma\bh_{k}\|^{2}\!\!+\!\sigma_{RIS}^{2}\tr(\bGamma\bGamma^{H})\!-\!P_{in}\leq P_{r,max}\label{Eq:PowerConstr}
	\end{align}
\end{subequations}
wherein the constraint enforces that the radio-frequency power consumed by the RIS can not exceed the power provided by the power amplifiers, and that the RIS is operating in the active mode, i.e. $P_{out}\geq P_{in}$. Problem \eqref{Prob:Case2_EE} is a challenging non-convex, fractional program, for which no known method exists to find the global solution with polynomial complexity. Traditional approaches in the literature resort to convex relaxations, e.g. the sequential optimization, which admit polynomial complexity. However, given the large number of optimization variables, polynomial complexity may not be practical. Instead, the CE and MH methods can tackle Problem \eqref{Prob:Case2_EE}, with linear complexity in the number of RIS elements. 

\subsubsection{CE method for \eqref{Prob:Case2_EE}}
In this case, we need to optimize both the phases and moduli of the RIS reflection coefficients. To begin with, let us set $\bx=\bgamma=[\gamma_{1},\ldots,\gamma_{N_{RIS}}]$. Then, the constraint in \eqref{Eq:PowerConstr} can be written as 
\begin{align}
P_{in}&\!\leq \!\sum_{n=1}^{N_{RIS}}\!|\gamma_{n}|^{2}\sum_{k=1}^{K}\left(p_{k}|h_{k,n}|^{2}\!+\!\sigma_{RIS}^{2}\right)\!\leq \!P_{r,max}\!+\!P_{in}\;,\notag
\end{align}
which implies that the constraint is coupled in the moduli of the RIS coefficients. In other words, the upper and lower limits in which each RIS coefficient can vary, depends on the values of the other coefficients. Indeed, defining $I_{i}=\sum_{n\neq i}^{N_{RIS}}\!|\gamma_{n}|^{2}\sum_{k=1}^{K}\left(p_{k}|h_{k,n}|^{2}\!+\!\sigma_{RIS}^{2}\right)$,
for any $i=1,\ldots,N$, \eqref{Eq:PowerConstr} can be written as $b_{i}^{-}\leq |\gamma_{i}|\leq b_{i}^{+}$, with 
\begin{align}\label{Eq:CoupledModuli}
b_{i}^{-}&\!=\!\sqrt{\frac{P_{in}-I_{i}}{\ds\sum_{k=1}^{K}p_{k}|h_{k,n}|^{2}\!+\!\sigma_{RIS}^{2}}}\;,\;{b_{i}^{+}=\sqrt\frac{P_{r,max} + P_{in}-I_{i}}{\ds\sum_{k=1}^{K}p_{k}|h_{k,n}|^{2}\!+\!\sigma_{RIS}^{2}}}
\end{align}
As a consequence, while the optimization of the phases of the RIS coefficients can be treated through a multi-variate wrapped Gaussian, as shown in Sec. \ref{Sec:CE_1}, a different approach is required for the optimization of the moduli. Specifically, while the data samples of the phases are generated in parallel for all $i=1,\ldots,N$, the generation of the samples of the moduli must account for the fact that the feasible set for one modulus, say $|\gamma_{i}|$, depends on the other moduli $\{|\gamma_{n}|\}_{n\neq i}$, due to \eqref{Eq:CoupledModuli}. Thus, the samples of the modulus of each RIS coefficient are generated sequentially for $i=1,\ldots,N_{RIS}$, as a truncated Gaussian distribution \cite{jammalamadaka_and_sengupta_2001_topics}, i.e. $|\gamma_{i}|\sim q(\varrho_{i},\bbtheta)$, with 
\begin{equation}
q(\varrho_{i},\bbtheta)=c_i(\bgamma) \cdot \mathcal{N}(\varrho_{i},\mu_{i},\sigma_{i}^{2}) \mathbb{I}_{\mathcal{B}}(\varrho_i)\;,
\end{equation}
wherein $\mathcal{N}(\varrho_{i},\mu_{i},\sigma^{2})$ denotes the Gaussian PDF with mean $\mu_{i}$ and variance $\sigma_{i}^{2}$, $\mathbb{I}_{\mathcal{B}}(\varrho_i)$ is the indicator function of the interval $\mathcal{B}=[b_{i}^{-},b_{i}^{+}]$, making sure that all generated samples are feasible, and $c_i(\bgamma)$ is the normalization term ensuring the PDF integrates to one, which is defined as 
\begin{equation}
	c_i(\bgamma) = \left[ \Phi\left( \frac{b_i^+ - \varrho_i}{\sigma_{i}} \right) - \Phi\left( \frac{b_i^{-} - \varrho_i}{\sigma_{i}} \right) \right]^{-1}\;,
\end{equation}
with $\Phi(\cdot)$ the standard Gaussian cumulative distribution function. Since the Gaussian distribution belongs to the NEF, also its truncated version does, as discussed in Remark \ref{Rmk:WrappedTruncated}. For this use-case, we consider the CE-$\boldsymbol{\mu}$ variant of the algorithm, which updates only the mean value\footnote{The extension to CE-$(\boldsymbol{\mu},\boldsymbol{\sigma})$ is straightforward and is omitted for brevity.}, i.e. $\bbtheta=\boldsymbol{\mu}$. Finally, as far as the optimization of the moduli is concerned, we set $t_{i}(\varrho_{i})=\varrho_{i}$, for $i=1,\ldots,N$. Moreover, we observe that the constraint in \eqref{Eq:PowerConstr} is not jointly convex with respect to the moduli $|\gamma_{1}|,\ldots,|\gamma_{N}|$, due to the lower bound inequality. This implies that, when updating the mean value of the moduli variables, the updated mean can lie outside the feasible set, which may lead to subsequent generations of unfeasible data points. In order to avoid this, we project the computed mean onto the feasible set, by simply applying a scaling factor to the vector of the moduli. With this formulation, we can apply Algorithm \ref{algo:vf} to solve Problem \eqref{Prob:Case1_EE}, and the following proposition holds. 
\begin{proposition}\label{Prop:CE_Case2}
Given the formulation above, Algorithm 2 converges and fulfills properties \textbf{(C1)}-\textbf{(C3)} of Proposition \ref{thm:convergence_CE}.
\end{proposition}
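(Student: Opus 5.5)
As for Proposition~\ref{Prop:CE_Case1}, the plan is to verify Assumptions \textbf{(A1)}--\textbf{(A6)} of Proposition~\ref{thm:convergence_CE} for the active-RIS formulation and then invoke that result. The optimization variable is $\bgamma$, which we parametrize by its phases $\bphi\in[0,2\pi]^{N_{RIS}}$ and moduli $\boldsymbol{\varrho}=[|\gamma_{1}|,\ldots,|\gamma_{N_{RIS}}|]$. The feasible set $\mathcal{X}$ is the product of the torus of phases with the set $\mathcal{R}$ of moduli satisfying \eqref{Eq:PowerConstr}.

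For \textbf{(A1)}, we write $\mathcal{R}=\{\boldsymbol{\varrho}\geq 0: P_{in}\leq \sum_{n}a_{n}\varrho_{n}^{2}\leq P_{r,max}+P_{in}\}$, with $a_{n}=\sum_{k}(p_{k}|h_{k,n}|^{2}+\sigma_{RIS}^{2})>0$. This set is closed, because it is the preimage of a closed interval under a continuous map intersected with the closed orthant. It is bounded, because it lies inside an ellipsoid. Hence $\mathcal{X}$ is compact, and it is nonempty whenever $P_{in}\leq P_{r,max}+P_{in}$, which always holds. For \textbf{(A2)}, the denominator of the EE is at least $P_{a}+P_{in}>0$ on $\mathcal{X}$, and the rates \eqref{Eq:Rate_Case2} are continuous in $\bgamma$ since their denominators are at least $\sigma_{BS}^{2}>0$. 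The EE is therefore continuous on the compact set $\mathcal{X}$, so it is bounded and attains its maximum. For \textbf{(A3)}, the phases follow the wrapped Gaussian/Von Mises law, as in Proposition~\ref{Prop:CE_Case1}. The moduli follow truncated Gaussians with $t_{i}(\varrho_{i})=\varrho_{i}$, which belong to the $m$-EF by Remark~\ref{Rmk:WrappedTruncated}. The joint sequential law is a product of conditional EF densities whose truncation bounds $b_{i}^{\pm}$ depend only on the already-sampled coordinates. We therefore regard the indicator of $\mathcal{X}$ as absorbed into the base measure $\tilde h$, and we treat the $\bgamma$-dependent normalizers as part of the conditional construction. \textbf{(A5)} holds trivially, since we use CE-$\boldsymbol{\mu}$ with fixed $\boldsymbol{\sigma}>0$ and no smoothing. \textbf{(A6)} holds by construction: each $\varrho_{i}$ is drawn inside $[b_{i}^{-},b_{i}^{+}]$ given the others, so every sample satisfies \eqref{Eq:PowerConstr}. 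We must also check that the scaling projection keeps the mean $\boldsymbol{\mu}$ inside $\mathcal{R}$, so that the next truncation intervals are nonempty. Scaling by $s=\sqrt{c/\sum_{n}a_{n}\mu_{n}^{2}}$, for any target $c\in[P_{in},P_{in}+P_{r,max}]$, does exactly this.

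The main obstacle is \textbf{(A4)}: showing that every neighbourhood of a maximizer has positive probability under the sequential truncated sampler at every iteration. I would fix $\delta>0$ and use continuity of $\mathcal{J}$ to obtain an open set $U\subset\mathcal{X}$ (relatively open) on which $\mathcal{J}>y^{\star}-\delta$. For the phases, the positivity of the wrapped Gaussian density on the torus is enough. For the moduli, I would argue that the sequential sampler has a density on $\mathcal{R}$ that is strictly positive on the relative interior. The reason is that each conditional truncated Gaussian has a strictly positive density on its nondegenerate interval $[b_{i}^{-},b_{i}^{+}]$, and the chain of conditionals covers every feasible point, because for any $\boldsymbol{\varrho}\in\mathcal{R}$ each coordinate lies in its own conditional interval. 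The delicate points are the boundary of $\mathcal{R}$ and degenerate intervals where $b_{i}^{-}=b_{i}^{+}$. I expect to handle them by noting that $U$ meets the relative interior of $\mathcal{R}$, since the interior is dense: $\mathcal{R}$ is the closure of its interior, being a shell between two ellipsoids. Sampling lands in $U\cap\mathrm{int}\,\mathcal{R}$ with positive probability.

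With \textbf{(A1)}--\textbf{(A6)} established, Proposition~\ref{thm:convergence_CE} yields \textbf{(C1)}--\textbf{(C3)}. The batch-expansion remark of its proof covers the variable sample size unchanged.
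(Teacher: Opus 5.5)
Your route is the paper's: verify \textbf{(A1)}--\textbf{(A6)} and invoke Theorem~\ref{thm:convergence_CE}. Your checks of \textbf{(A1)}, \textbf{(A2)}, \textbf{(A5)}, \textbf{(A6)} are sound and more explicit than the paper's. The gap is in \textbf{(A4)}, at the claim that the chain of conditionals covers every point of $\mathcal{R}$ ``because each coordinate lies in its own conditional interval''. For a given $\boldsymbol{\varrho}\in\mathcal{R}$, $\varrho_i$ lies in $[b_i^-,b_i^+]$ computed from $I_i$ at $\boldsymbol{\varrho}_{-i}$. The sampler, however, fixes the interval for $\varrho_i$ before $\varrho_{i+1},\ldots,\varrho_{N_{RIS}}$ are drawn. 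It uses $\varrho_{<i}$ together with the current (projected-mean) values of the remaining moduli, or $\varrho_{<i}$ alone in your own reading. The lower bound $b_i^-$ then forces the partial power up to $P_{in}$ too early, so the reachable set is in general a proper subset of $\mathcal{R}$. Concretely, take $N_{RIS}=2$, $a_1=a_2=1$, $P_{in}=P_{r,max}=1$, so that $\mathcal{R}=\{\boldsymbol{\varrho}\geq 0:\ 1\leq\varrho_1^2+\varrho_2^2\leq 2\}$. Drawing $\varrho_1$ with $\varrho_2$ held at some $\mu_2<1$ forces $\varrho_1\geq\sqrt{1-\mu_2^2}>0$, and ignoring undrawn coordinates forces $\varrho_1\geq 1$. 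Either way, a relatively open neighbourhood of $(0,1.2)\in\mathcal{R}$ receives zero probability. If the EE maximizer sits there (e.g., element $1$ has a negligible channel to the BS, so any power on it only inflates the denominator), \textbf{(A4)} fails at that iteration. Your positivity-on-$\mathrm{int}\,\mathcal{R}$ claim is therefore not established. There are two ways to close the step. One is to redefine the truncation so that it slices $\mathcal{R}$ exactly: for $i<N_{RIS}$ use $[0,\sqrt{(P_{in}+P_{r,max}-\sum_{n<i}a_n\varrho_n^2)/a_i}]$, and impose the lower constraint only on the last coordinate. Your density argument then goes through, provided $P_{r,max}>0$. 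The other is to state \textbf{(A4)} as an assumption on the sampling mechanism. The paper's proof does not resolve this either; it simply asserts \textbf{(A4)} ``by construction''.

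Your \textbf{(A3)} step also states rather than verifies the assumption. The joint law of the sequential draws is $\prod_i c_i\,\mathcal{N}(\varrho_i;\mu_i,\sigma^2)\,\mathbb{I}_{[b_i^-,b_i^+]}(\varrho_i)$. Its normalizers $c_i$ depend jointly on $\mu_i$ and on the already-drawn moduli, and in the Gibbs-style reading its support depends on $\boldsymbol{\mu}$ itself. So its logarithm is not of the form $\ln c(\bbtheta)+\bbtheta^{T}\bt(\bx)+\ln\tilde h(\bx)$. Absorbing the indicator into $\tilde h$, as in Remark~\ref{Rmk:WrappedTruncated}, handles a fixed support but not these coupled normalizers. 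This affects only \textbf{(C2)}, where Proposition~\ref{Prop:Opt_theta} and the moment-matching update are used. The paper's proof is equally terse here, but you should present it as an extra assumption rather than as something verified.
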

\begin{proof}
\textbf{(A1)} holds because the feasible set of \eqref{Prob:Case2_EE} is compact. \textbf{(A2)-(A3)} hold because the amplitudes are drawn from a truncated 
distribution that belongs to the NEF, with full support on each feasible interval. \textbf{(A4)} holds because, by construction and due to the continuity of the objective function, the update mechanism ensures that in each iteration $t$, the probability that the objective is larger than $y^\star - \delta$ is strictly positive for every $\delta > 0$. \textbf{(A5)} holds because the variance $\sigma$ is initialized to a positive value and is not updated throughout the execution of the algorithm. \textbf{(A6)} holds because the update mechanism ensures feasibility at each draw by construction.
\end{proof}

\subsubsection{MH method for \eqref{Prob:Case2_EE}}
In order to apply the MH method, we generate the samples of the moduli of the reflection coefficients again from the truncated Gaussian distribution $q(\varrho_{i} | \bgamma^{[t]}) = c_i(\bgamma^{[t]}) \!\cdot \!\mathcal{N}(\varrho_i,\mu_{i}, \sigma^2) \mathbb{I}_{\mathcal{B}^{[t]}}(\varrho_i)$, with $\!\bgamma^{[t]}\!$ the vector $\!\bgamma\!$ in the $t$-th iteration. As for the optimization of the RIS phases, the same approach as in Sec. \ref{Sec:MH_1} is used. The following result holds. 
\begin{proposition}\label{Prop:MH_Case2}
Given the formulation above, applying Algorithm \ref{algo:MH_final} generates a Markov chain that converges in total variation distance to its unique stationary distribution. 
\end{proposition}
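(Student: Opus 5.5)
The plan is to verify the hypotheses of Proposition \ref{thm:ergodic_theorem} for the chain generated by Algorithm \ref{algo:MH_final}, with state $\bx=\bgamma$ and target $\pi(\bgamma)\propto e^{\beta\mathcal{J}(\bgamma)}$. The argument mirrors the proof of Proposition \ref{Prop:MH_Case1}; the new ingredient is the coupled, sequentially generated moduli. Denote by $\mathcal{X}$ the feasible set of \eqref{Prob:Case2_EE}.

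\textbf{Step 1: setup and stationarity.} $\mathcal{X}$ is compact: the moduli lie in a bounded shell defined by \eqref{Eq:PowerConstr}, and the phases lie in $[0,2\pi)^{N_{RIS}}$. The objective is continuous on $\mathcal{X}$, because the denominator is at least $P_a>0$. Hence $\pi$ is a bounded, strictly positive density on $\mathcal{X}$. By construction of the acceptance rule \eqref{Eq:stationaryDB}, detailed balance \eqref{Eq:DB_cond} holds, so $\pi$ is invariant. This requires the Hastings ratio to use the true proposal density, namely the product of the sequential truncated-Gaussian conditionals (with their normalizers $c_i$) and the wrapped-Gaussian phase densities.

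\textbf{Step 2: minorization.} This is the main step. I would show that the proposal $q(\cdot|\bgamma^{[t]})$ is bounded below by some $\epsilon>0$ on all of $\mathcal{X}$, uniformly in the current state.
\begin{itemize}
\item The wrapped Gaussian phase density is strictly positive and continuous on the torus, so it has a positive minimum.
\item Each modulus $\varrho_i$ is drawn from a Gaussian truncated to $[b_i^-,b_i^+]$. These intervals are determined by the previously drawn moduli through $I_i$. Every feasible point of $\mathcal{X}$ is reachable by this sequential scheme, because each coordinate of a feasible point lies in its own conditional interval.
\item The Gaussian density is bounded below on the compact range of moduli, and $c_i\ge 1$. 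Multiplying the conditionals therefore gives a uniform lower bound on the joint proposal density.
\end{itemize}
The ratio $\pi(\bx)/\pi(\by)\ge e^{-\beta(\max\mathcal{J}-\min\mathcal{J})}$ bounds the acceptance probability below as well. Hence $P(\by,A)\ge \epsilon'\,\lambda(A)$ for all $\by$, which is a Doeblin condition.

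From this, $\pi$-irreducibility follows in one step. Aperiodicity follows because the one-step kernel charges every set of positive measure, including neighbourhoods of the current state, and because rejections occur with positive probability. A uniform Doeblin condition on a compact space gives uniform ergodicity, hence positive Harris recurrence. Proposition \ref{thm:ergodic_theorem} then yields convergence in total variation to the unique stationary distribution $\pi$.

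\textbf{Expected obstacle.} The delicate part is Step 2: showing the conditional intervals are nondegenerate, i.e. $b_i^-\le b_i^+$ with the feasible region having nonempty interior, so that the truncated densities are well defined. Lower bounds like $b_i^-$ should be read as $\sqrt{\max\{0,\cdot\}}$. I would first try to show directly that the sequential construction keeps all partial sums within the feasible shell. If that fails, I would assume a strictly feasible interior and argue Lebesgue-a.e., which suffices for $\pi$-irreducibility.
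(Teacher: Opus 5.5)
Your route is the paper's own. Its proof only says the claim follows as for Proposition~\ref{Prop:MH_Case1}: a proposal density strictly positive on all of $\mathcal{X}$ gives irreducibility in one step, rejections give aperiodicity, and compactness is invoked for positive Harris recurrence. Your uniform Doeblin bound is a cleaner way to get that last property. However, the step you call the main one, which the paper also takes for granted, fails for the coupled moduli, and your justification for it mixes up two different conditionings. In \eqref{Eq:CoupledModuli}, $[b_i^-,b_i^+]$ is the slice of the power shell through the point whose other moduli enter $I_i=\sum_{n\neq i}(\cdot)$. When $\varrho_i$ is drawn, the not-yet-drawn moduli in $I_i$ are those of the current state $\bgamma^{[t]}$; for $i=1$ this is all of them. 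Hence, under the sequential scan, a target $\varrho'$ is proposable only if every hybrid $(\varrho'_1,\dots,\varrho'_i,\varrho_{i+1},\dots,\varrho_{N_{RIS}})$ lies in the shell. The fact that $\varrho'_i$ lies in its slice given $\varrho'_{-i}$ is irrelevant.

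A two-element example shows the failure. Take $N_{RIS}=2$ with equal weights and $P_{r,max}=0.21\,P_{in}$, and rescale the moduli so the shell reads $1\le\varrho_1^2+\varrho_2^2\le 1.21$. The current moduli $(1,0.1)$ and the target $(0.1,1)$ are both feasible. But $I_1$ uses the current $\varrho_2=0.1$, so $\varrho_1$ is confined to $[\sqrt{0.99},\sqrt{1.2}]$, which excludes a whole neighbourhood of $0.1$. Reading $I_i$ as containing only already-drawn moduli, as your first bullet suggests, does not help either: then $b_1^-$ forces $\varrho_1$ alone to meet the lower power bound. So $q(\cdot|\by)$ vanishes on sets of positive Lebesgue measure in $\mathcal{X}$, and the bound $P(\by,A)\ge\epsilon'\lambda(A)$ is false. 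One-step $\pi$-irreducibility therefore fails too. Your Lebesgue-a.e.\ fallback does not touch this, because the problem is not degenerate intervals.

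The conclusion can still be reached by one of two repairs.

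\textbf{(i) Keep the paper's sampler and prove an $m$-step minorization.} For $P_{r,max}>0$ the shell is connected and the slice intervals are nondegenerate at interior points. You would show that a fixed number $m$ of moves brings any state into a fixed small ball $B$ with probability bounded below. This gives $P^m(\by,\cdot)\ge\epsilon\,\lambda(\cdot\cap B)$ and hence uniform ergodicity. Note that the accepted-move density is $\min\{q(\bx|\by),\,q(\by|\bx)\pi(\bx)/\pi(\by)\}$. So you must also control the reverse proposal along those moves; bounding $\pi(\bx)/\pi(\by)$ alone, as in your Step~2, is not enough once $q$ is not uniformly positive.

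\textbf{(ii) Change the truncation.} Truncate $\varrho_i$ to the values that admit a feasible completion given only $\varrho_1,\dots,\varrho_{i-1}$: a partial-sum upper bound for $i<N_{RIS}$, and both bounds only at $i=N_{RIS}$. Then every feasible point is proposable and your one-step Doeblin argument goes through verbatim. But this is a different sampler from \eqref{Eq:CoupledModuli}.

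A minor point: continuity of the EE follows because the denominator is at least $\sum_k p_k+N_{RIS}P_{c,n}+P_0>0$, using $P_{out}\ge P_{in}$. $P_a$ itself need not be positive.
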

\begin{proof}
	The result follows similarly as for Proposition \ref{Prop:MH_Case1}.
\end{proof}

\section{Numerical Results}\label{sec:numerical_results}
This section numerically analyzes the performance and running time of the CE and MH methods, comparing them to several benchmark algorithms. 

\subsection{Preliminary evaluation of CE on benchmark scenario}
Before analyzing the performance of the CE and MH methods in the two case studies of Sec.~\ref{s:case_study}, we consider a \emph{benchmark scenario} in which a nearly-passive RIS with phase-only control assists the communication between a single-antenna transmitter and a single-antenna receiver. Let $N_{RIS}$ denote the number of RIS elements, and let $\bh,\bg\in\mathbb{C}^{N_{RIS}\times1}$ denote the transmitter-to-RIS and RIS-to-receiver channels, respectively. In this simple setting, it is well-known that the optimal RIS configuration is that which compensates the phase of the cascaded channel, i.e., $\phi_n^\star=\angle(h_ng_n^*)$, for $n=1,\ldots,N_{RIS}$.
Fig.~\ref{fig:u1_RIS_a1_explicit} compares the capacity ($R$) vs. transmitted power ($P$) achieved by this optimal solution with that obtained using the proposed CE algorithm, configured with $N_0=50$, $N_{elite}=6$, $\Delta N=25$, a maximum sample budget per iteration of $N_{max}=2500$ (never reached in practice), and $\sigma=\SI{2.5}{\degree}$, i.e. the CE method that updates only the mean of the distribution is considered here. 
For all simulations, the algorithm is initialized with the \emph{same phase value} for all RIS elements, i.e., $\phi_i^{[0]}=\pi$, $\forall i$.
The results show that the proposed CE algorithm converges to within $1\%$ of the globally-optimal solution in  $ \approx\SI{0.06}{\second}$. 
Although stochastic optimization is unnecessary when a closed-form solution exists, this benchmark confirms the ability of the proposed framework to \emph{rapidly identify globally-optimal configurations from a non-optimized initialization}. As shown next, the benefits become significantly more pronounced in realistic RIS optimization problems, where closed-form solutions are not available and deterministic methods incur substantially higher computational complexity.

\begin{figure}[t]
\centering

\includegraphics[width=0.5\textwidth,]{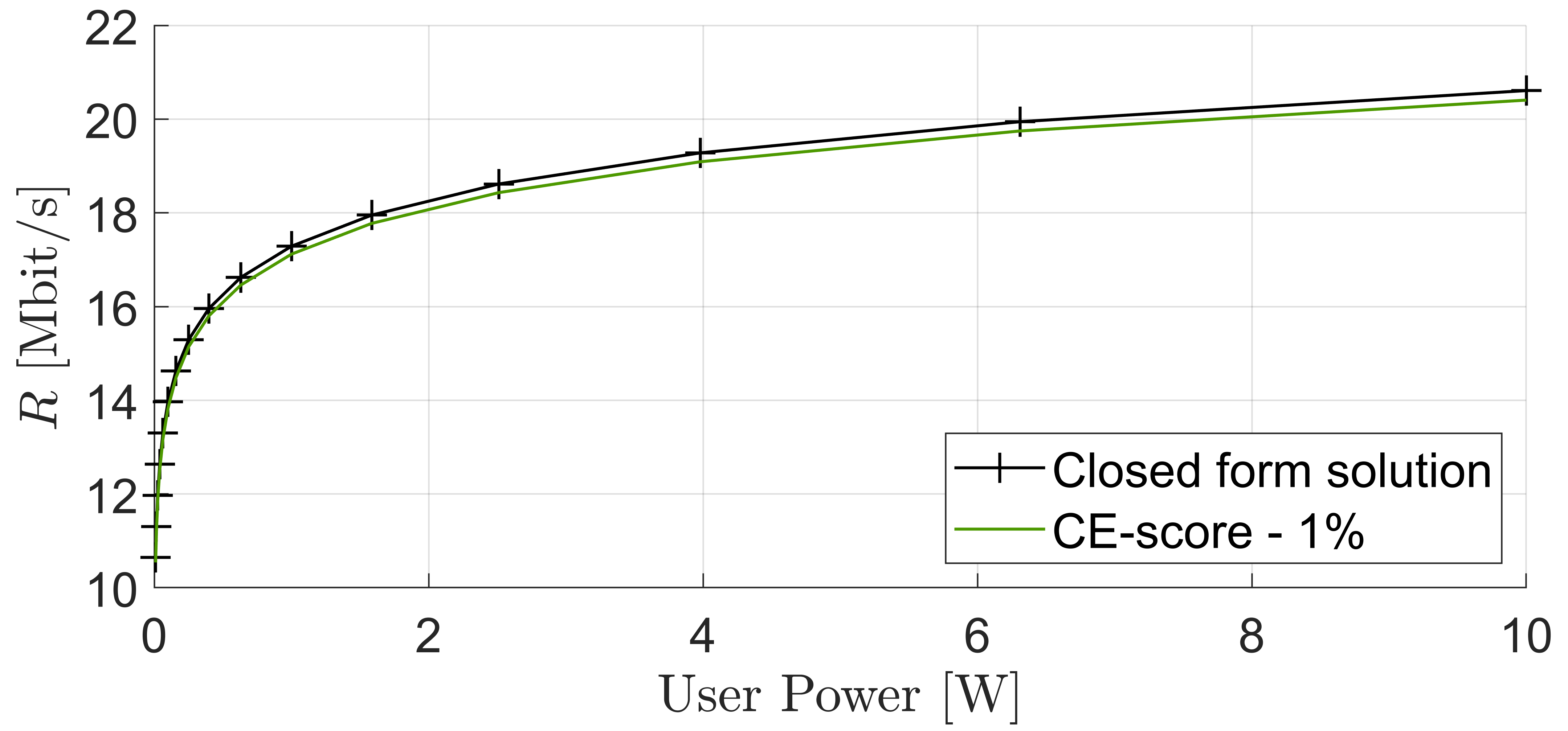}
\caption{
\textbf{Benchmark validation of the CE algorithm.}
Achievable rate versus transmit power for a single-user SISO system assisted by a nearly-passive RIS. The proposed (continuous) CE method converges to within $1\%$ of the closed-form optimal RIS configuration.
}
\label{fig:u1_RIS_a1_explicit}
\end{figure}

\subsection{Numerical results for case-study 1}
\label{ss:opt_mdl1}
We consider here the nearly-passive RIS scenario described in Sec.~\ref{subsec:case1}, where $K=4$ users communicate with a BS equipped with $N_R=4$ antennas through an RIS comprising $N_{RIS}=100$ elements. Fig.~\ref{fig:LMMSE_compare_CE_Cont_Discr} compares the proposed continuous implementation of the CE algorithm against its discrete counterpart and the conventional AO, which alternatively optimizes one phase shift at a time. Among deterministic optimization methods, AO is considered the one with the lowest complexity. The figure reports the achieved sum-rate (left) and the execution time (right) for the following \emph{schemes}:
\begin{itemize}
\item Proposed continuous CE with mean adaptation only. After convergence, the RIS phases in $\bphi$ are quantized to $4$ bits (\textcolor{blue}{{$\textrm{cCE}_{\mu}^{4b}$}}) to comply with the discrete phase resolution of practical metasurfaces. 
\item Discrete implementation of the CE method, as discussed in Remark \ref{Rem:DiscreteCE}, with mean adaptation only and $4$-bit phase resolution (\textcolor{blue}{{$\textrm{dCE}_{\mu}^{4b}$}}).
\item Proposed continuous CE with \emph{joint} mean and variance adaptation. After convergence, the RIS phases in $\bphi$ are quantized to $4$ bits (\textcolor{blue}{{$\textrm{cCE}_{(\mu,\sigma)}^{4b}$}}).
\item Discrete implementation of the CE method, as discussed in Remark \ref{Rem:DiscreteCE}, with \emph{joint} mean and variance adaptation and $4$-bit phase resolution (\textcolor{blue}{{$\textrm{dCE}_{(\mu,\sigma)}^{4b}$}}).
\item Alternating optimization of each RIS phase shift, with 4-bit quantization (\textcolor{blue}{{$\textrm{AO}$}}).
\end{itemize}
Since CE is a stochastic optimization method, each configuration was evaluated over $20$ independent runs. The figure reports the average performance together with the $90\%$ confidence interval\footnote{The shaded region corresponds to a probability of 90\% of containing the performance upon convergence of the algorithm.}. 
Furthermore, each CE algorithm was run until reaching at least the sum-rate achieved by \textcolor{blue}{{$\textrm{AO}$}}, and all results are reported relative to this benchmark. The results show that all CE variants attain a sum-rate not lower than \textcolor{blue}{{$\textrm{AO}$}}, while requiring significantly lower execution times. 
More importantly, the proposed \emph{continuous implementations of the CE method consistently outperform their discrete counterparts}, achieving the same performance with reduced time complexity. \textcolor{blue}{{$\textrm{cCE}_{(\mu,\sigma)}^{4b}$}} \emph{is the fastest}, with approximately one fifth of the runtime of \textcolor{blue}{{$\textrm{AO}$}}, closely followed by \textcolor{blue}{{$\textrm{cCE}_{\mu}^{4b}$}}. In contrast, the discrete CE implementations are approximately $2\times$ slower \textcolor{blue}{{$\textrm{dCE}_{(\mu,\sigma)}^{4b}$}} and $4\times$ slower \textcolor{blue}{{$\textrm{dCE}_{\mu}^{4b}$}}. These results clearly motivate the adoption of the proposed continuous CE algorithm.

Fig.~\ref{fig:LMMSE_compare_MultiScen} analyzes the performance in terms of sum-rate and execution time versus the users' transmit power $P$. The top panel (Fig.~\ref{fig:Rsum_case1}) shows the absolute value of the sum-rate, whereas the bottom panels (Figs.~\ref{fig:RsumRatio_case1} ~and~\ref{fig:TimeRatio_case1}) report the sum-rate and execution time performance relative to \textcolor{blue}{{$\textrm{AO}$}}, for the following schemes:
\begin{itemize}
\item Continuous CE with mean adaptation only. After convergence, the RIS phases in $\bphi$ are quantized to $4$ bits (\textcolor{blue}{{$\textrm{cCE}_{\mu}^{4b}$}}).
\item Continuous CE with joint mean and variance adaptation. After convergence, the RIS phases in $\bphi$ are quantized to $4$ bits (\textcolor{blue}{{$\textrm{cCE}_{(\mu,\sigma)}^{4b}$}}).
\item MH optimization. After convergence, the RIS phases in $\bphi$ are quantized to $4$ bits (\textcolor{blue}{{$\textrm{MH}^{4b}$}}).
\item Alternating optimization of each RIS phase shift, with 4-bit quantization (\textcolor{blue}{{$\textrm{AO}$}}).
\item Random-Max Sampling algorithm from~\cite{ren2022configuring} (\textcolor{blue}{{$\textrm{RMS}$}}). 
\end{itemize}
Convergence for all methods was declared upon reaching the sum-rate value of \textcolor{blue}{{$\textrm{AO}$}} or exceeding a maximum allowed run-time. For this reason, \textcolor{blue}{{$\textrm{RMS}$}} is reported only in Fig.~\ref{fig:Rsum_case1}, since it fails to reach the \textcolor{blue}{{$\textrm{AO}$}} sum-rate  within the allowed run-time. 

The results indicate that all proposed stochastic optimization methods achieve sum-rate values not inferior to those obtained by \textcolor{blue}{{$\textrm{AO}$}}, whereas the \textcolor{blue}{{$\textrm{RMS}$}} exhibits a noticeable performance loss. 
At the same time, the \emph{proposed \textcolor{blue}{{$\textrm{cCE}_{\mu}^{4b}$}}, \textcolor{blue}{{$\textrm{cCE}_{\mu,\sigma}^{4b}$}}, and \textcolor{blue}{{$\textrm{MH}^{4b}$}} algorithms significantly reduce the time complexity}. 
Depending on the values of the transmit power $P$, \textcolor{blue}{{$\textrm{MH}^{4b}$}} achieves speedups ranging from approximately $5\times$ to $10\times$ relative to \textcolor{blue}{{$\textrm{AO}$}}, while \textcolor{blue}{{$\textrm{cCE}_{\mu}^{4b}$}} and \textcolor{blue}{{$\textrm{cCE}_{\mu,\sigma}^{4b}$}} provide speedups of about $4\times$ and $5\times$, respectively.
The confidence intervals provide further insight into the behavior of the stochastic algorithms.  \textcolor{blue}{{$\textrm{cCE}_{\mu}^{4b}$}} exhibits the smallest variability across independent runs, followed by \textcolor{blue}{{$\textrm{cCE}_{\mu,\sigma}^{4b}$}}, whereas \textcolor{blue}{{$\textrm{MH}^{4b}$}} shows the largest variability. This highlights a tradeoff between convergence speed and robustness: algorithms that explore the search space more aggressively tend to converge faster but exhibit larger run-to-run fluctuations, whereas more conservative updates lead to greater stability at the expense of longer run-times. 
\begin{figure}[h]
	\centering
	\includegraphics[width=0.5\textwidth]{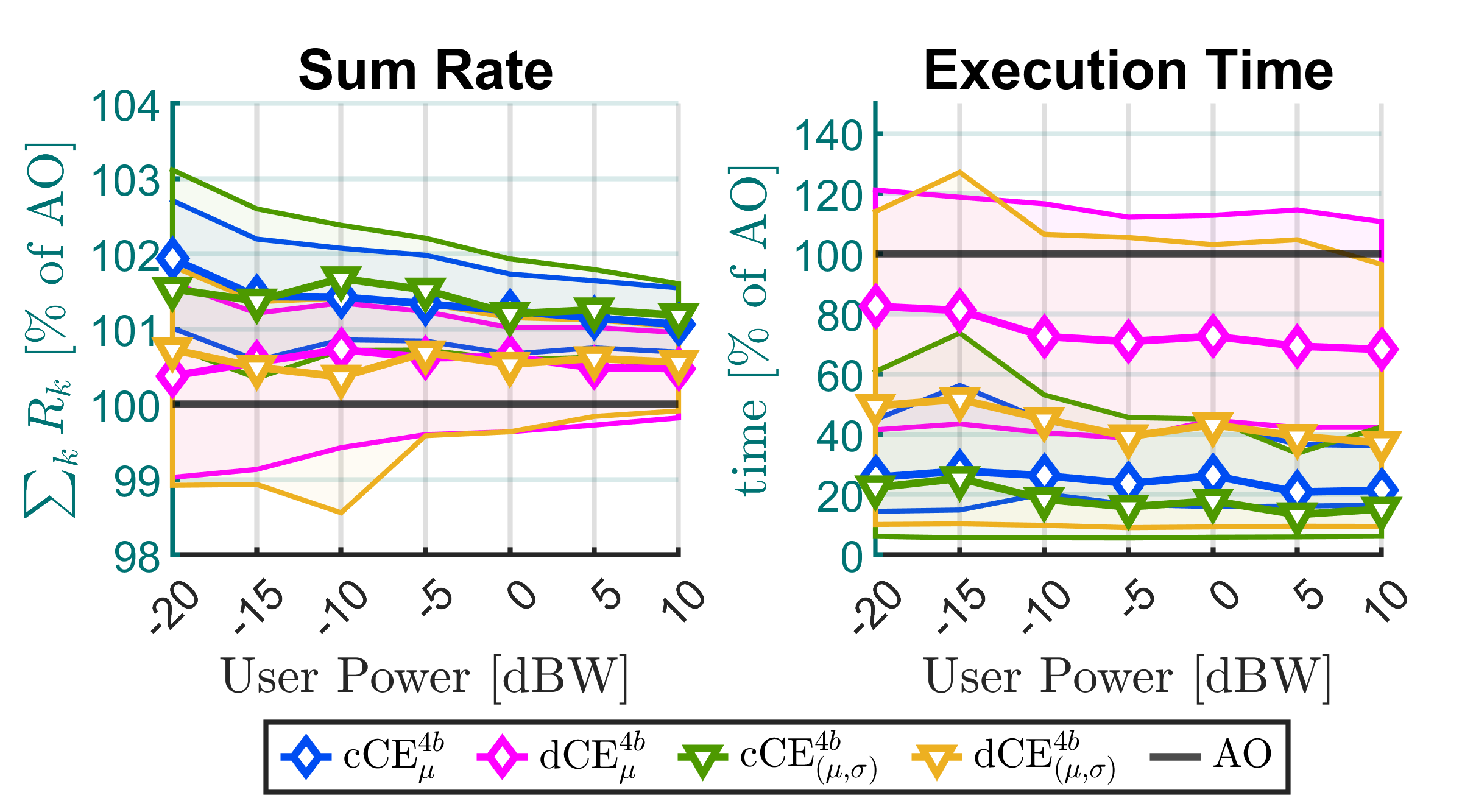}			
\caption{
\textbf{Case Study 1 (Nearly-Passive RIS): Continuous vs. discrete CE optimization.}
Normalized sum-rate (left) and execution time (right) achieved by continuous and discrete CE implementations, compared with AO. Results are averaged over $20$ independent runs, while the shaded regions indicate the corresponding performance variability.
}    
	\label{fig:LMMSE_compare_CE_Cont_Discr}
\end{figure}

\begin{figure*}[t]
\centering

\begin{minipage}[t]{0.47\textwidth}
    \centering
    \subfloat[Sum Rate]{
        \includegraphics[width=\linewidth]
        {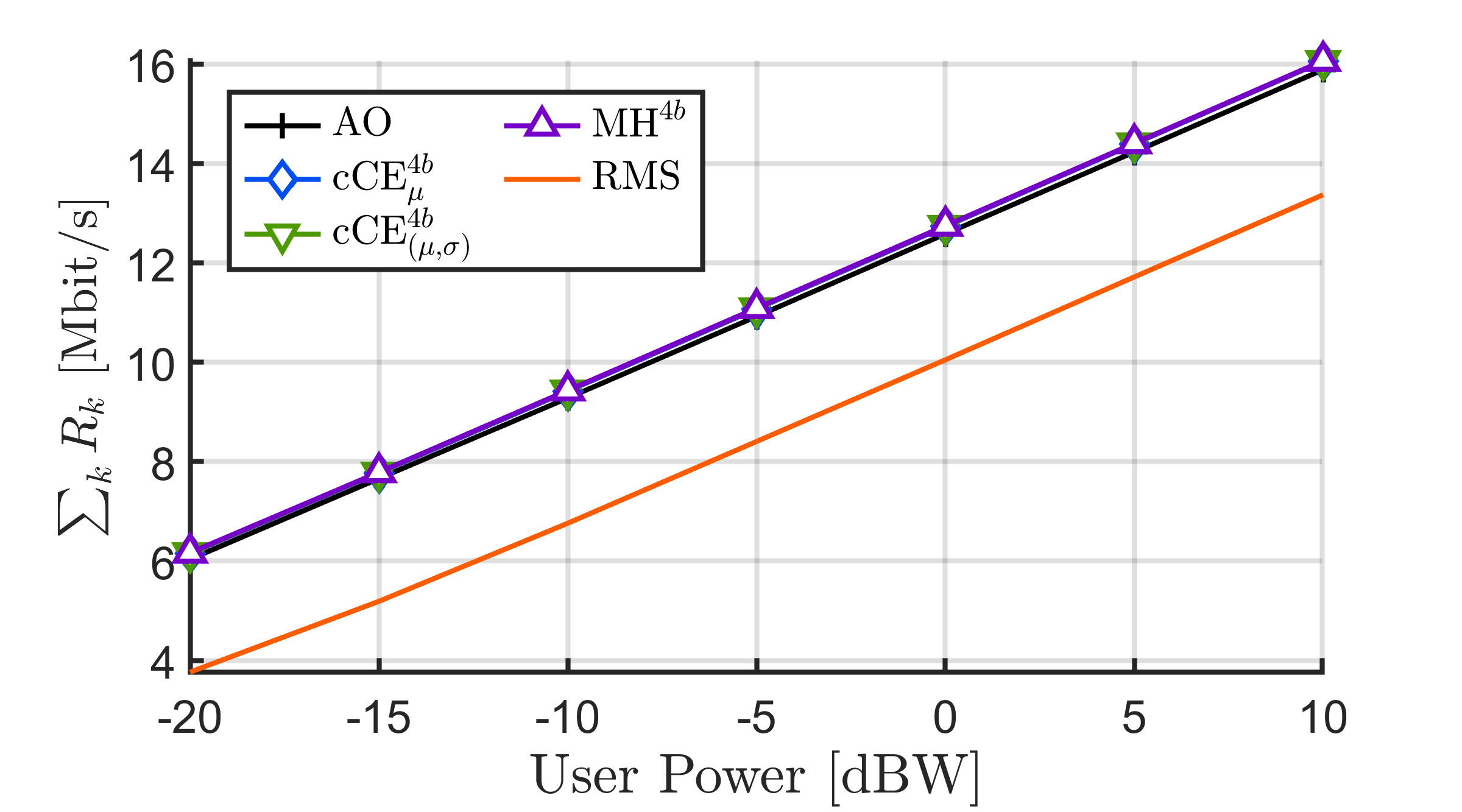}
        \label{fig:Rsum_case1}
    }
\end{minipage}
\hfill
\begin{minipage}[t]{0.50\textwidth}
\vspace{10pt}
\centering
\begin{tcolorbox}[
    width=0.85\linewidth,
    colback=blue!5,
    colframe=blue!60!black,
    arc=3mm,
    boxrule=0.8pt,
    left=2mm,
    right=2mm,
    top=1.5mm,
    bottom=1.5mm
]
\small
\justifying
\noindent Comparison between the deterministic AO algorithm and the proposed stochastic optimization methods (CE, MH and RMS) over $20$ independent channel realizations. For stochastic methods, solid curves denote average performance over
different runs, while the shaded regions indicate the corresponding
$90\%$ confidence intervals.\\
\textbf{Subfigure (a)} reports the achieved sum-rate, while \textbf{subfigure (b)} its value relative to AO  (in [\%] form). Finally, \textbf{subfigure (c)} reports the execution-time ratio (in [\%] form) with respect to AO.
\end{tcolorbox}

\end{minipage}

\vspace{2mm}

\subfloat[Relative Sum Rate]{
    \includegraphics[width=0.47\textwidth]
    {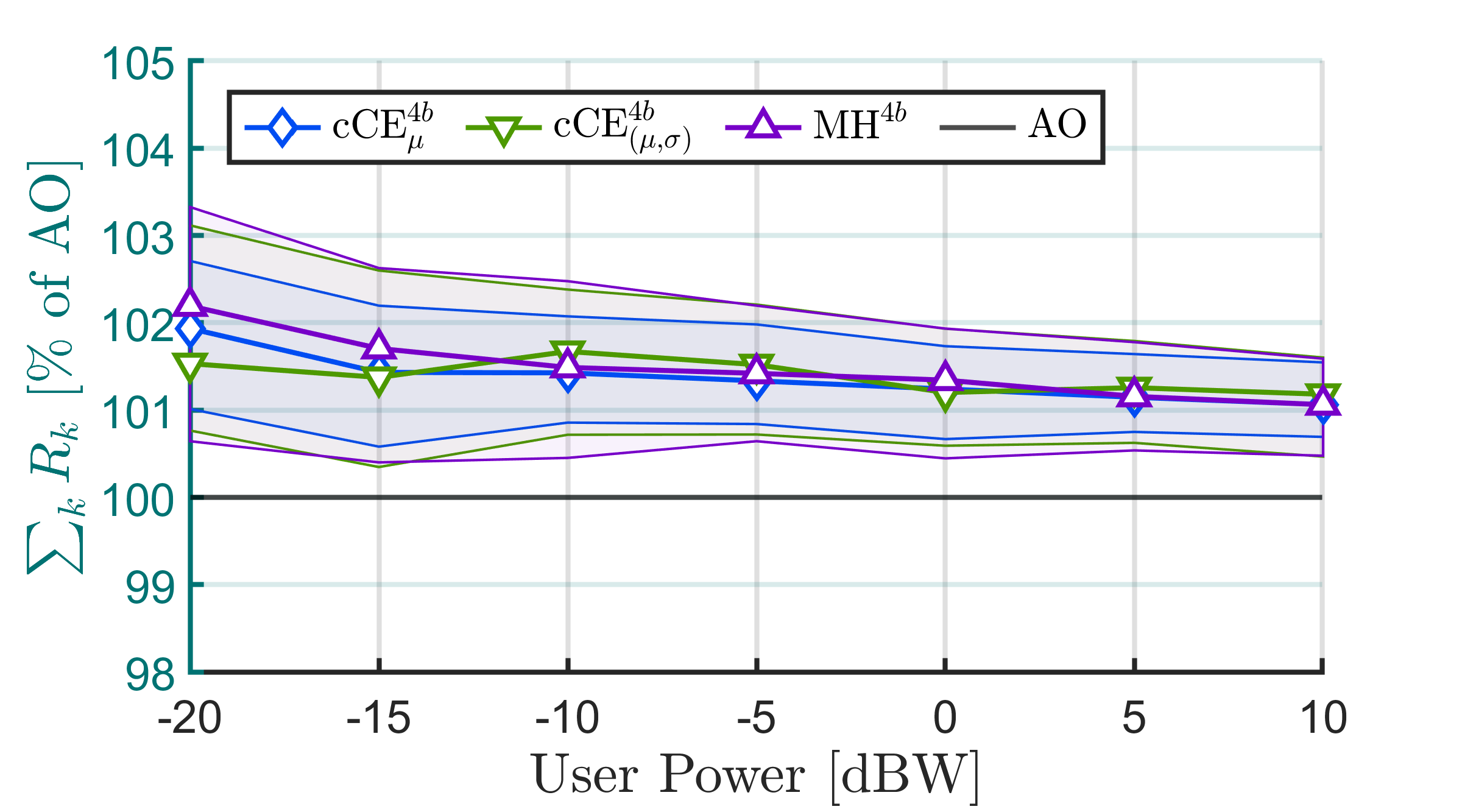}
    \label{fig:RsumRatio_case1}
}
\hfill
\subfloat[Relative Execution Time]{
    \includegraphics[width=0.47\textwidth]
    {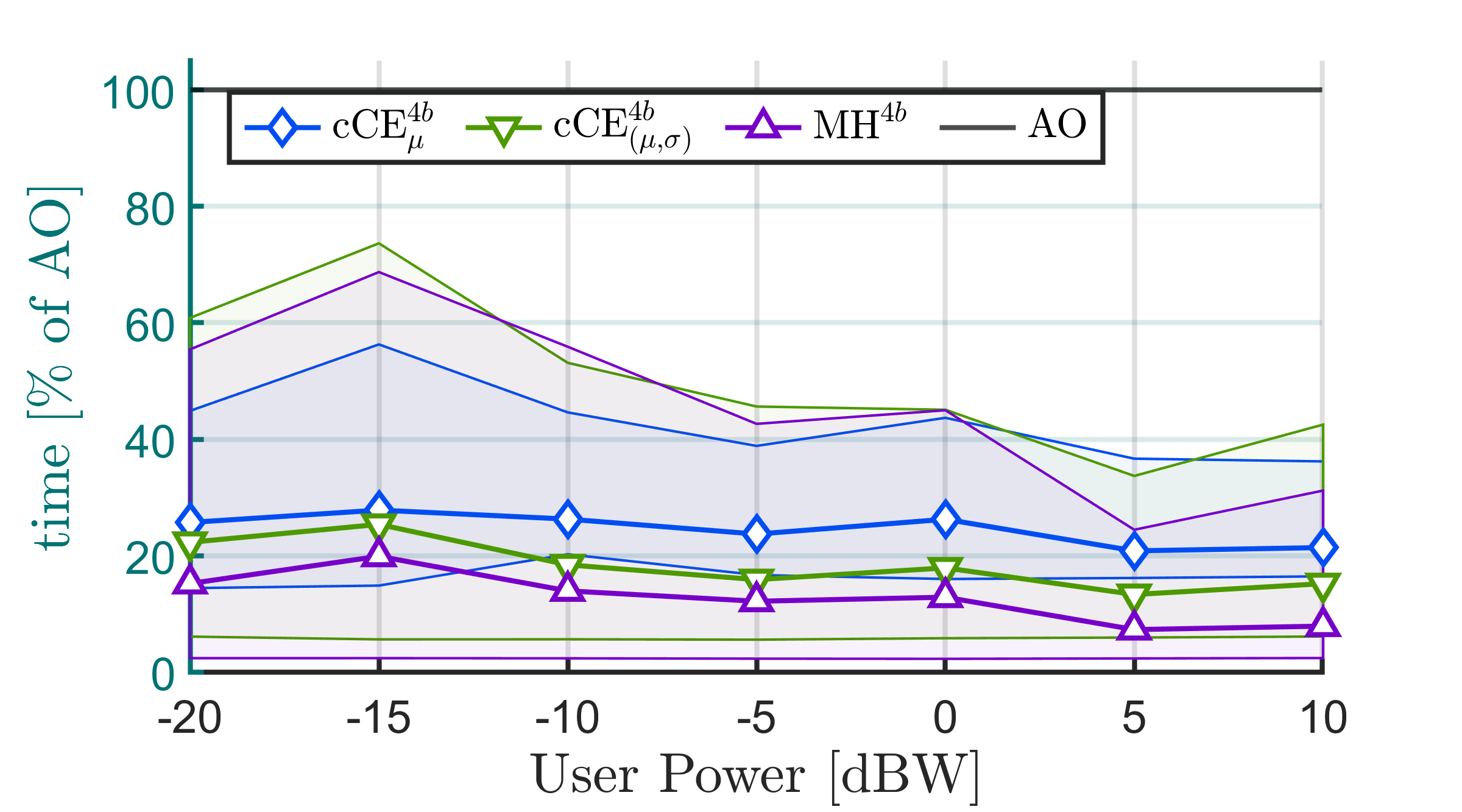}
    \label{fig:TimeRatio_case1}
}

\caption{\textbf{Case Study 1 (Sum-rate maximization with a nearly-passive RIS).}}
\label{fig:LMMSE_compare_MultiScen}
\end{figure*}

\subsection{Numerical Results for Case Study 2}
We now consider the active-RIS scenario described in Sec.~\ref{subsec:case2}, consisting of $N_{RIS}=100$ RIS elements assisting the communication between $K=4$ single-antenna users and a single-antenna receiver. Fig.~\ref{fig:Holog_model_compare} reports the achieved EE and the corresponding execution time as functions of the users' transmit power $P$. In contrast to Case Study~1, the performance metric is the EE, which is not necessarily monotonic in $P$ due to the tradeoff between achievable rate and power consumption.
For clarity, we restrict the comparison to \textcolor{blue}{AO}, \textcolor{blue}{{$\textrm{MH}^{4b}$}}, and \textcolor{blue}{{$\textrm{cCE}_{\mu}^{4b}$}}. 
The omitted schemes exhibit trends similar to those observed in Case Study~1. 
It is also worth emphasizing that the considered optimization problem is significantly \emph{more challenging} than the sum-rate maximization problem of Sec.~\ref{ss:opt_mdl1}, since both the phases and amplitudes of the RIS coefficients must be optimized under coupled active-RIS power constraints.
In this scenario, unlike what happens in Case Study 1, Figs.~\ref{fig:GEE_case2}~and~\ref{fig:GEE_Ratio_case2} show that \textcolor{blue}{{$\textrm{MH}^{4b}$}} and \textcolor{blue}{{$\textrm{cCE}_{\mu}^{4b}$}} achieve a visible EE gain compared to \textcolor{blue}{AO}. 
Specifically, \textcolor{blue}{{$\textrm{MH}^{4b}$}} and \textcolor{blue}{{$\textrm{cCE}_{\mu}^{4b}$}} outperform \textcolor{blue}{AO} by a factor up to $3.5$ and $2$, respectively. 
On the other hand, it is also observed that, for large values of $P$, \textcolor{blue}{AO} slightly outperforms \textcolor{blue}{{$\textrm{MH}^{4b}$}} and \textcolor{blue}{{$\textrm{cCE}_{\mu}^{4b}$}}. However, it is to be mentioned that the range of interest of the transmit power level $P$, is that which corresponds to the largest values of the EE, while very low or very high values of $P$ are not relevant, since the EE tends to zero due to either a low sum-rate or a high power consumption. 
The execution times reported in Fig.~\ref{fig:GEE_TimeRatio_case2} confirm that the proposed stochastic methods retain a complexity advantage in this more challenging optimization setting, too. Compared to \textcolor{blue}{AO}, in the operating region of interest, \textcolor{blue}{{$\textrm{MH}^{4b}$}} is typically from $2\times$ to $10\times$ faster, while \textcolor{blue}{{$\textrm{cCE}_{\mu}^{4b}$}} from $1.6\times$ to $10\times$ faster. Only for large transmit powers  \textcolor{blue}{AO} becomes marginally faster, again corresponding to a regime with low achieved EE.

\begin{figure*}[t]
\centering

\begin{minipage}[t]{0.47\textwidth}
    \centering
    \subfloat[EE]{
        \includegraphics[width=\linewidth]
        {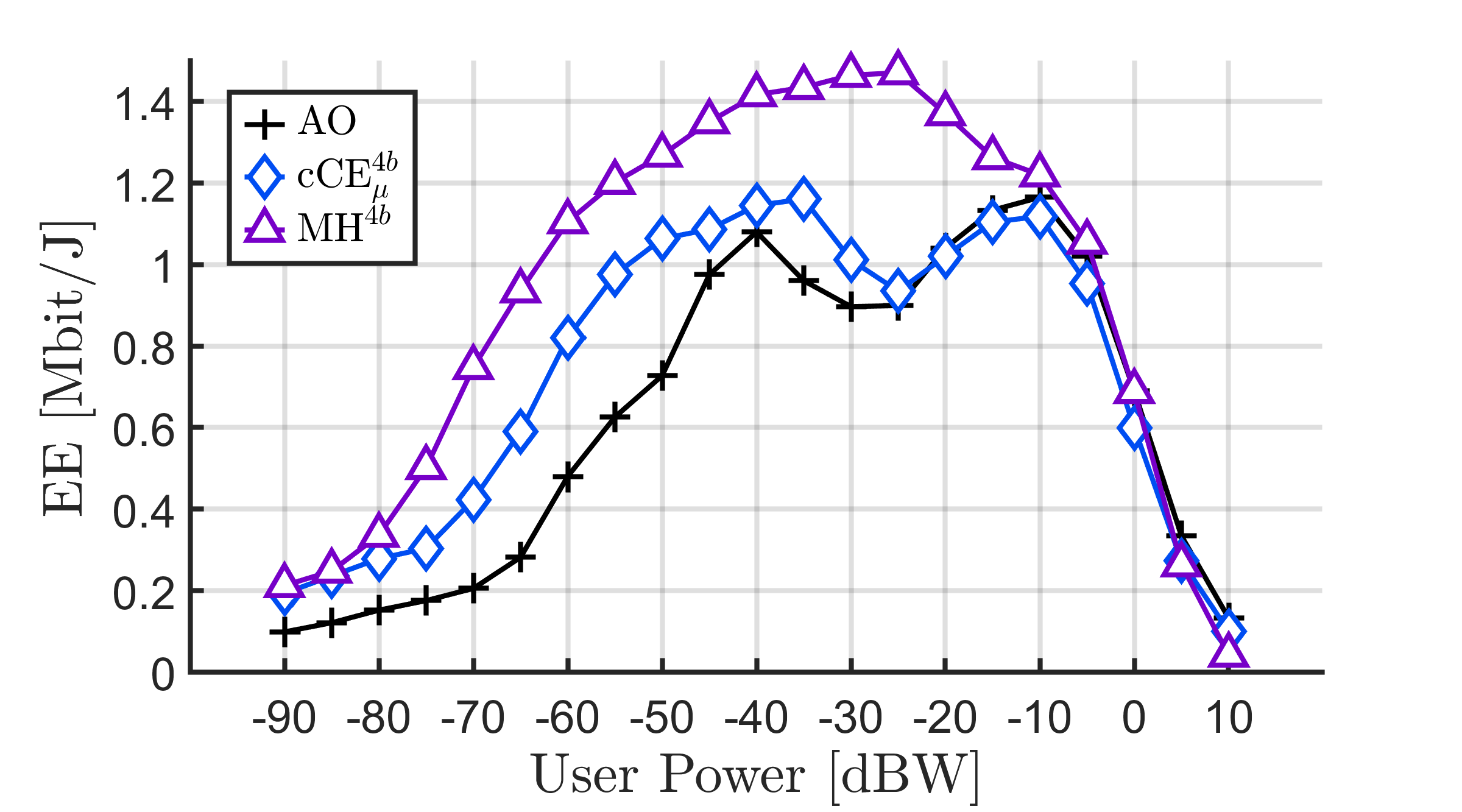}
        \label{fig:GEE_case2}
    }
\end{minipage}
\hfill
\begin{minipage}[t]{0.50\textwidth}
\vspace{10pt}
\centering
\begin{tcolorbox}[
    width=0.85\linewidth,
    colback=blue!5,
    colframe=blue!60!black,
    arc=3mm,
    boxrule=0.8pt,
    left=2mm,
    right=2mm,
    top=1.5mm,
    bottom=1.5mm
]
\small
\justifying
\noindent Comparison between the (deterministic) AO algorithm and the proposed stochastic optimization methods (CE and MH) over $20$ independent channel realizations. For stochastic methods, solid curves denote average performance over different runs.
\\
\textbf{Subfigure (a)} reports the achieved EE, while \textbf{subfigure (b)} its value relative to AO  (in [\%] form). Finally, \textbf{subfigure (c)} reports the execution-time ratio (in [\%] form) with respect to AO.
\end{tcolorbox}

\end{minipage}
\vspace{2mm}
\subfloat[Relative EE]{
    \includegraphics[width=0.47\textwidth]
    {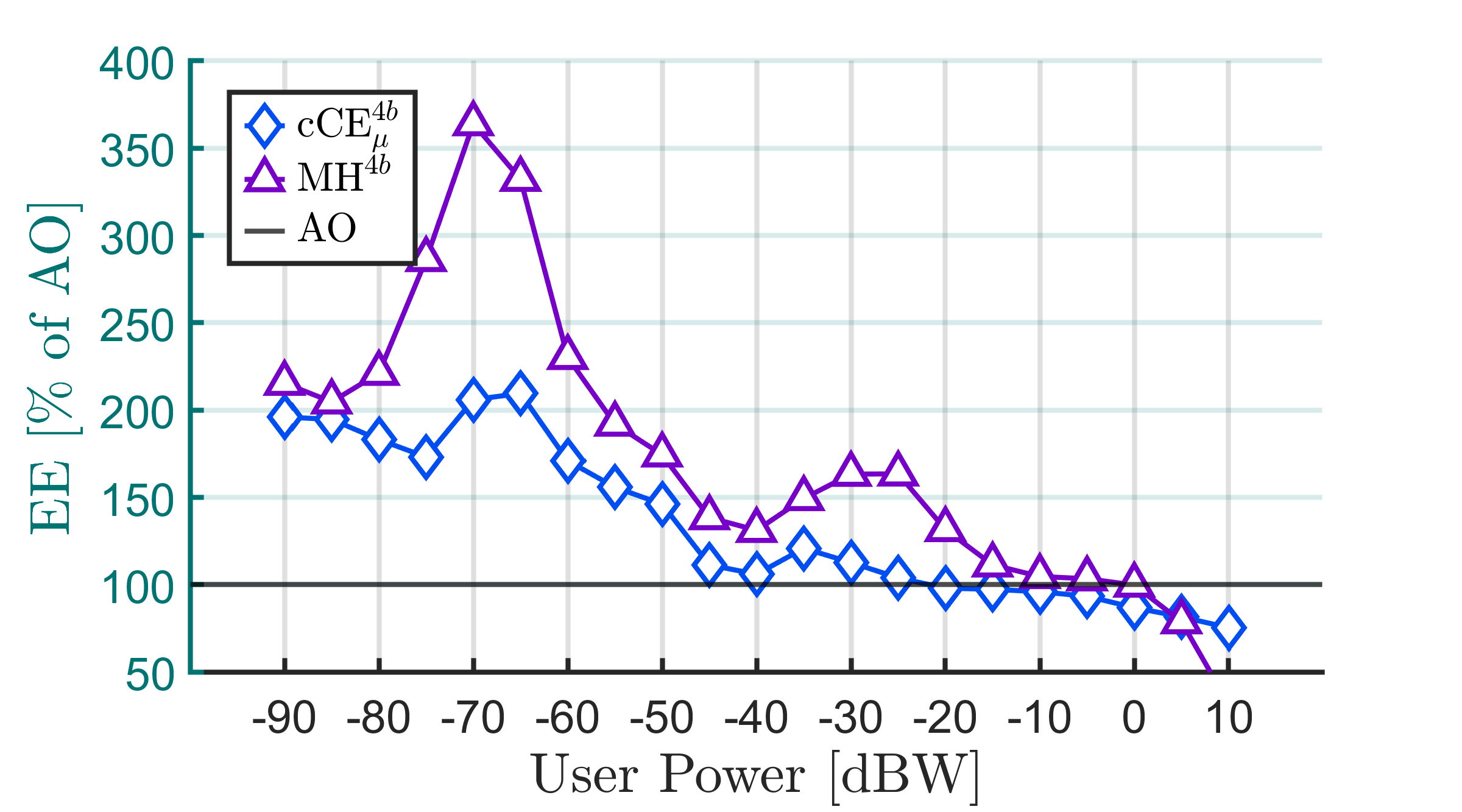}
    \label{fig:GEE_Ratio_case2}
}
\hfill
\subfloat[Relative Execution Time]{
    \includegraphics[width=0.47\textwidth]
    {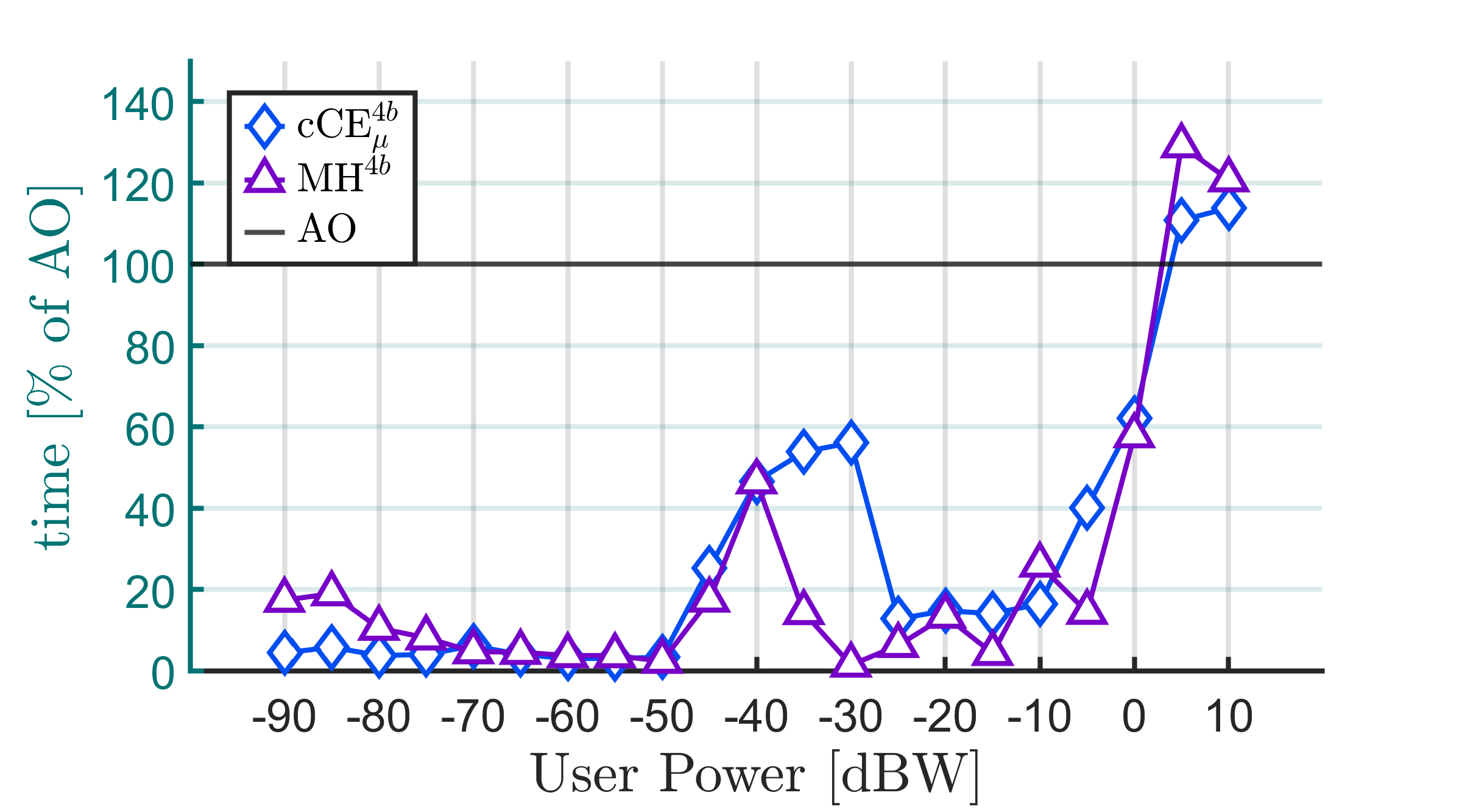}
    \label{fig:GEE_TimeRatio_case2}
}

\caption{\textbf{Case Study 2 (EE maximization with an active RIS).}}
\label{fig:Holog_model_compare}
\end{figure*}

\section{Conclusion}\label{sec:conclusions}
This paper developed a stochastic optimization framework for RIS-aided wireless network design. The framework enables the direct optimization of continuous RIS coefficients and can also be applied to discrete phase-shift designs through relaxation and projection. Unlike existing stochastic optimization approaches for RISs, the proposed framework was supported by a theoretical convergence and efficiency analysis, and was applied to two representative non-convex design problems.

The numerical results showed that the proposed stochastic methods provide a favorable performance-complexity tradeoff compared to conventional AO. For nearly-passive RISs, the continuous CE implementation achieved the same or slightly better sum-rate than discrete CE and AO, while significantly reducing the execution time. The MH method also achieved competitive sum-rate performance with a further reduction in runtime, although with larger variability across independent runs. For active RISs, where both the phases and amplitudes of the RIS coefficients must be optimized, the proposed CE and MH algorithms provided clear EE gains over AO in the practically-relevant transmit power regime. In this regime, the stochastic methods were also faster, with MH and CE requiring only a fraction of the runtime of AO. 

These results indicate that stochastic optimization is a promising tool for scalable RIS configuration, especially when the feasible set is high-dimensional, non-convex, and difficult to handle with conventional gradient-based or alternating methods. 
\emph{Future work} may extend the proposed framework to multi-RIS deployments, beyond-diagonal and stacked metasurface architectures, imperfect channel state information, and online implementations for time-varying wireless environments.

\bibliographystyle{IEEEtran}
\bibliography{IEEEabrv,references_4_article_TLC_MAT_PROB_Gagliardi.bib,Biblio.bib}

\begin{appendix}
\subsection{Proof of Proposition \ref{Prop:Opt_theta}}
\noindent By the linearity of the mean, the thesis holds if $\ln(q(\bX, \bbtheta))$ is concave in $\bbtheta$. Evaluating the Hessian of $\ln(q(\bx, \bbtheta))$ yields
\begin{align}
\nabla^{2}_{\scriptsize\bbtheta}\ln(q(\bx, \!\bbtheta))&\!\!=\!\!\nabla^{2}_{\scriptsize\bbtheta}\!\left(\!\ln c(\bbtheta)\!+\!\bbtheta^{T}\bt(\bx)\!+\!\ln h(\bx)\!\right)\!=\!\nabla^{2}_{\scriptsize\bbtheta}\ln(c(\bbtheta))\notag
\end{align}
and, thus, the concavity of $\ln(q(\bx, \bbtheta))$ holds if $-\nabla^{2}_{\scriptsize\bbtheta}\ln(c(\bbtheta))=\nabla^{2}_{\scriptsize\bbtheta}A(\bbtheta)$ can be shown to be positive semidefinite, with $A(\bbtheta)=-\ln(c(\bbtheta))$. Elaborating, we have
\begin{align}
			\nabla_{\scriptsize{\bbtheta}} A(\bbtheta)& \!=\! \nabla_{\scriptsize\bbtheta} \!\left(\! \ln \!\!\int \!\!e^{\scriptsize{\bbtheta^T \bt(\bx)}}  h \left( \bx \right)  \!d \bx \!\right)\! =\! \frac{ \int \nabla_{\scriptsize\bbtheta}\; e^{\scriptsize{\bbtheta^T \bt(\bx)}}  h \left( \bx \right) \! d \bx } {  
				\int e^{\scriptsize{\bbtheta^T \bt(\bx)}}  h \left( \bx \right)  d \bx
			} \notag\\
			& \! = \!c(\bbtheta) \int \bt( \bx) e^{\scriptsize{\bbtheta^T \bt(\bx)}}  h(\bx)  d \bx 
			= \int \bt(\bx) 	q(\bx,\bbtheta)  d \bx \notag
	\end{align}
Deriving again, we obtain
\begin{align}
			&\hspace{-0.3cm}\nabla_{\scriptsize\bbtheta}^2 A\left(\boldsymbol{\theta}\right) 
			 \!= \!\nabla_{\boldsymbol{\theta}}\!\! \!\int \!\!\bt(\bx) q(\bx,\bbtheta)  d \bx\!=\! \!\!\int \!\!\bt(\bx) \!\left[ \nabla_{\scriptsize{\bbtheta}}\! \ln \!q(x,\!\bbtheta) \!\right]^T\!\!\! q(x,\!\bbtheta) \!d \bx\! \notag\\
			&\hspace{-0.3cm}=\int\bt(\bx) \left[\bt(\bx) -\nabla_{\scriptsize\bbtheta} A(\bbtheta) \right]^T q(x,\bbtheta) d \bx\notag\\
			&\hspace{-0.3cm} = \int \!\bt(\bx) \bt( \bx)^T\! q(x,\bbtheta) \; \!d \bx\!-\! \left(\! \int \bt( \bx) q(x,\bbtheta)  d \bx \!\right)\! \left( \nabla_{\scriptsize\bbtheta} A(\bbtheta) \right)^T\notag \\
			&\hspace{-0.3cm} = \E \left[ \bt(\bx) \bt(\bx)^T \right] - \E\left[ \bt( \bx) \right] \E\left[ \bt( \bx) \right]^{T} 
			= \text{Cov}\left[\bt(\bx)\right],
	\end{align}
wherein we have exploited that $\nabla_{\scriptsize\bbtheta}\ln(q(\bx, \bbtheta))=\bt(\bx) - \nabla A_{\scriptsize\bbtheta}(\bbtheta)$, and 
$\text{Cov}\left[\bt(\bx)\right]$ denotes the covariance matrix of $\bt(\bx)$. Then,  the thesis follows from the fact that covariance matrices are positive semi-definite.
\end{appendix}

\end{document}